\newtheorem{thm}{Theorem}				
\newtheorem{prop}[thm]{Proposition}		
\newtheorem{lem}[thm]{Lemma}			
\DeclareMathOperator{\pr}{\mathrm{Pr}} 		
\begin{document}

\title{The Benefit of Encoder Cooperation\\
in the Presence of State Information}

\author{Parham~Noorzad,~\IEEEmembership{Student Member,~IEEE,} 
Michelle Effros,~\IEEEmembership{Fellow,~IEEE,}\\ and Michael
Langberg,~\IEEEmembership{Senior~Member,~IEEE}%
\thanks{This material is based upon work supported by the 
National Science Foundation under Grant Numbers 1527524 and 1526771.}%
\thanks{P. Noorzad and M. Effros are with the California
Institute of Technology, Pasadena, CA 91125 USA
(emails: parham@caltech.edu, effros@caltech.edu). }%
\thanks{M. Langberg is with the State
University of New York at Buffalo, Buffalo, NY 14260 USA
(email: mikel@buffalo.edu).}}

\maketitle

\begin{abstract}
In many communication networks, 
the availability of channel state information 
at various nodes provides an opportunity for 
network nodes to work together, or ``cooperate.''
This work studies the benefit of cooperation in the 
multiple access channel with a cooperation facilitator,
distributed state information at the encoders, 
and full state information available at the decoder. 
Under various causality constraints, sufficient conditions are 
obtained such that encoder cooperation through the facilitator results in
a gain in sum-capacity that has infinite slope in the information
rate shared with the encoders. This result extends the prior work of the authors
on cooperation in networks where none of the nodes have access
to state information.
\end{abstract}


\section{Introduction} \label{sec:intro}

In cooperative strategies, various network nodes work together towards
a common goal. Previous work \cite{kUserMAC} shows that under a model of cooperation that 
incorporates a ``cooperation facilitator'' (CF)---a node that receives rate-limited
information from the encoders of a multiple access channel (MAC) and 
sends rate-limited information back---even a very low rate cooperation 
between the MAC encoders can 
vastly increase the total rate that can be delivered through the MAC. In fact,
if we measure cost as the number of bits the CF shares with the encoders and
the benefit as the gain in sum-capacity, then for some MACs, the cost-benefit curve
has infinite slope in the limit of low cost. This paper extends the exploration of 
cooperation beyond the networks of \cite{kUserMAC} to examine the cost-benefit
tradeoff of cooperation in networks where state information is present at some
nodes. 

Networks where state information is available at some nodes appear
in many applications, 
including wireless channels with fading \cite{GoldsmithVaraiya,TseHanly}, 
cognitive radios \cite{SomekhBaruchEtAl}, 
and computer memory with defects \cite{HeegardElGamal}.
Depending on the application at hand, channel state information
may be either fully available at all network nodes or available
in a distributed manner; in the latter case, 
each node has access to a component or
a function of the state sequence. Furthermore, the state 
information may be available non-causally, or alternatively, 
may be subject to causality constraints. 
For example, when state information 
models fading effects in wireless communication \cite{GoldsmithVaraiya}, 
the transmitters' knowledge of
state information is strictly causal or causal. On the other hand,
when the state sequence models a signal that the transmitter sends
to another receiver, then the state
sequence is available non-causally at the transmitter \cite{Costa}. 

In this work, we study the advantage of \emph{encoder cooperation}
in the setting of networks with state information. In this context,
network nodes work together to increase transmission rates---not only
by sharing message information, but also by sharing state information.
(See Figure \ref{fig:network}.)
As an example of message and state cooperation, 
Permuter, Shamai, and Somekh-Baruch \cite{PermuterEtAl} find the capacity
region of the MAC with encoder cooperation 
under the assumption that distributed, non-causal 
state information is available at the encoders and full state 
information is available at the decoder. 
As their cooperation model, the authors use
a special case of the Willems conferencing model \cite{WillemsMAC}, 
originally defined for MACs in the absence of state information. 

Indirect forms of cooperation, in the presence of state information,
are also considered in the literature. 
Cemal and Steinberg \cite{CemalSteinberg} study a model where a 
central state-encoder sends rate limited versions of non-causal 
state information to each encoder, while the decoder has access 
to full state information. 

Here we study cooperation under the CF model. 
In this model, encoders cooperate indirectly as in \cite{kUserMAC}, 
rather than directly, as in \cite{WillemsMAC}. 
The CF enables both message and state cooperation; 
this proves crucial to the cooperation
gain obtained through a CF, which we next describe in more detail. 
 
In earlier work \cite{kUserMAC},
we exhibit single-letter conditions on the channel transition matrix of the 
MAC which guarantee an infinite slope in sum-capacity
as a function of the capacities of the CF output edges; the
additive Gaussian MAC \cite[p. 544]{CoverThomas} 
is an important example of a scenario where the infinite
slope phenomenon occurs.    
In this work, we characterize channels for which the 
cooperation gain has an infinite slope in the presence of state
information (Section \ref{sec:resultMACwState}); interestingly, this 
includes channels for which the infinite slope phenomenon did
not arise in the absence of state information.\footnote{As an 
example, consider the MAC $Y=X_1+X_2+S$ (mod 3), where
$S$ is uniform on $\{0,1,2\}$, $X_1$ and $X_2$ are binary, and
$Y$ is ternary. The infinite slope sum-capacity gain is achievable 
when the decoder has full knowledge of $S$, but no
sum-capacity gain is possible when it does 
not have access to $S$.}

For state information at the encoders we consider four cases:
(i) no state information,
(ii) strictly causal state information,
(iii) causal state information, and
(iv) non-causal state information. In case (i), the CF is  
used for sharing message information (a strategy here called 
``message cooperation'') since no state information is
available at the encoders. In cases (ii)-(iv), the CF 
enables both message and state cooperation. However,
here we study message and state cooperation only in case (iv);
in this case we show that the use of joint message and state
cooperation leads to a weaker sufficient condition 
for an infinite-slope
gain compared to the sole use of message cooperation.
Whether in cases (ii) and (iii), the use of joint message 
and state cooperation likewise leads to a weaker sufficient condition 
for an infinite-slope gain compared to message cooperation alone,
remains an open problem. 

Throughout, we assume that any state information
available at the encoders is distributed; that is, 
we assume $S=(S_1,S_2)$, where for $i\in\{1,2\}$,
$S_i$ is available at encoder $i$.
As we do not make any assumptions regarding the 
dependence between $S_1$ and $S_2$, our results apply to
the limiting cases of independent states (i.e., independent $S_1$ and $S_2$)
and common state (i.e., $S_1=S_2$). 

Since the decoder starts the decoding process only after receiving
all the output symbols in a given transmission block, 
causality constraints at the decoder do not impose limitations
on the availability of state information. 
Thus we may assume that the decoder 
either has full state information or no state information.
Here we focus on the former scenario.
Jafar \cite{Jafar} provides the capacity region of the MAC with distributed 
independent (causal or non-causal) state information at the encoders
and full state information at the decoder.  
The capacity region is unknown when the encoders have access
to state information but the decoder does not 
\cite{LapidothSteinberg1, LapidothSteinberg2}. 

\begin{figure} 
	\begin{center}
		\includegraphics[scale=0.3]{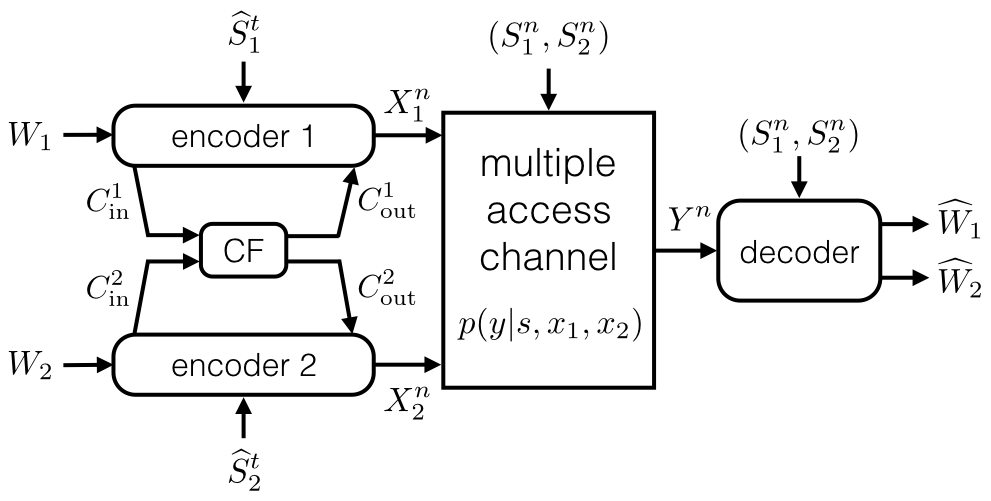}
		\caption{The network studied in this work consists of a pair of encoders
		communicating, with the help of a CF, to a decoder through a
		state-dependent MAC. Full state information is available at the decoder.
		At time $t\in[n]$, partial state information $\widehat{S}_i^t$ 
		is available to encoder $i\in\{1,2\}$.} \label{fig:network}
	\end{center}
\end{figure}

\section{Model} \label{sec:model}
\subsection{Preliminaries} \label{subsec:prelim}
Let $\mathcal{S}_1$, $\mathcal{S}_2$, 
$\mathcal{X}_1$, $\mathcal{X}_2$, and $\mathcal{Y}$
be discrete or continuous alphabets. 
A MAC with input alphabet
$\mathcal{X}_1\times\mathcal{X}_2$, output 
alphabet $\mathcal{Y}$, and state alphabet 
$\mathcal{S}\coloneqq \mathcal{S}_1\times\mathcal{S}_2$ is given by the sequence 
\begin{equation*}
  \Big\{p(s^n)p(y^n|s^n,x_1^n,x_2^n)\Big\}_{n=1}^\infty.
\end{equation*} 
The MAC is said to be memoryless and stationary
if for some $p(s)p(y|s,x_1,x_2)$ and all positive integers $n$, 
\begin{equation*}
  p(s^n)p(y^n|s^n,x_1^n,x_2^n)
  =\prod_{t=1}^n p(s_t)p(y_t|s_t,x_{1t},x_{2t}).
\end{equation*}

\subsection{Message Cooperation}
\label{subsec:msgCoop}

In this subsection, we define the capacity region
of a MAC with a CF that enables message cooperation.
We include four scenarios in our definition based on
the availability of state information at the encoders:
no state, strictly causal, causal, and non-causal. We 
assume full state information is available at the 
decoder. In our definition below, for any real number
$x\geq 1$, $[x]$ denotes the set
$\{1,\dots,\lfloor x\rfloor\}$ .

We start by defining a $(2^{nR_1},2^{nR_2},n)$-code
for the MAC with a $(\mathbf{C}_\mathrm{in},\mathbf{C}_\mathrm{out})$-CF, 
cost functions
$b_i:\mathcal{X}_i\rightarrow\mathbb{R}_{\geq 0}$
for $i\in \{1,2\}$, and cost constraints $B_1,B_2\geq 0$. 
The pairs 
$\mathbf{C}_\mathrm{in}=(C_\mathrm{in}^1,C_\mathrm{in}^2)$
and $\mathbf{C}_\mathrm{out}=(C_\mathrm{out}^1,C_\mathrm{out}^2)$
denote the CF input and output edge capacities, respectively.
Encoder $i$, for $i\in\{1,2\}$, is represented by 
$(\varphi_\mathrm{in}^i,(f_{it})_{t=1}^n)$, the CF
is represented by 
$(\varphi_\mathrm{out}^1,\varphi_\mathrm{out}^2)$, and the 
decoder is represented by $g$. These mappings are defined
in the order of their use below. 
For $i\in\{1,2\}$, the transmission from encoder $i$ to the
CF is represented by the mapping 
\begin{equation} \label{eq:encCF}
  \varphi_\mathrm{in}^i: [2^{nR_i}]\rightarrow [2^{nC_\mathrm{in}^i}]
\end{equation}
and the transmission from the CF to encoder $i$ is represented by
\begin{equation*}
  \varphi_\mathrm{out}^i:[2^{nC_\mathrm{in}^1}]\times 
  [2^{nC_\mathrm{in}^2}]\rightarrow
  [2^{nC_\mathrm{out}^i}].
\end{equation*}
For simplicity, the transmissions to and from the CF
occur prior to the transmission of codewords over 
the channel. 

At time $t\in [n]$, for $i\in\{1,2\}$,
the transmission of encoder $i$ over 
the channel is represented by the mapping 
\begin{equation} \label{eq:encoder}
  f_{it} : [2^{nR_i}]\times[2^{nC_\mathrm{out}^i}]
  \times\widehat{\mathcal{S}}_i^t\rightarrow\mathcal{X}_i.
\end{equation}
Here $\widehat{S}_i^t$ represents any knowledge about the state gathered
by encoder $i$ in times $\{1,\dots,t\}$. Let $*$ be a symbol not in 
$\mathcal{S}_1\cup\mathcal{S}_2$. For $i\in\{1,2\}$ and $t\in [n]$, 
we have
\begin{equation*}
  \widehat{S}_{it}\coloneqq 
  \begin{cases}
  * &\text{no state information}\\
  S_{i(t-1)} &\text{strictly causal}\\
  S_{it} &\text{causal}\\
  S_i^n &\text{non-causal}.
  \end{cases}
\end{equation*}
For every message pair $(w_1,w_2)$, the codeword of
encoder $i$ is required to satisfy the cost constraint
\begin{equation} \label{eq:cost}
  \sum_{t=1}^n \mathbb{E}b_i\Big[f_{it}\big(w_i,
  \varphi_\mathrm{out}^i(\varphi_\mathrm{in}^1(w_1),
  \varphi_\mathrm{in}^2(w_2)),\widehat{S}_i^t\big)\Big]\leq B_i. 
\end{equation}
The decoder has full state information and is represented by the mapping
\begin{equation*}
  g:\mathcal{S}^n\times\mathcal{Y}^n
	\rightarrow [2^{nR_1}]\times [2^{nR_2}].
\end{equation*}
The average probability of error is given by
\begin{equation*}
  P_e^{(n)}=\pr\Big\{g(S^n,Y^n)\neq (W_1,W_2)\Big\},
\end{equation*}
where $(W_1,W_2)$ is uniformly distributed over 
$[2^{nR_1}]\times [2^{nR_2}]$.
A rate pair $(R_1,R_2)$ is achievable if there exists a sequence of 
$(2^{nR_1},2^{nR_2},n)$-codes with $P_e^{(n)}\rightarrow 0$ as 
$n\rightarrow\infty$. We use subscript $\tau\in\{0,T-1,T,\infty\}$
to specify the dependence of the capacity region and 
sum-capacity on the availability of state information at the encoders.
The following table makes this dependence clear.

\begin{center}
\begin{tabular}{ c | c }			
  $\tau$ & encoder state information \\
  \hline
  0 & none \\
  $T-1$ & strictly causal \\
  $T$ & causal \\
  $\infty$ & non-causal \\ 
\end{tabular}
\end{center}

The capacity region 
$\mathscr{C}_\tau(\mathbf{C}_\mathrm{in},\mathbf{C}_\mathrm{out})$
is given by the closure of all achievable rate pairs. The sum-capacity,
denoted by $C_\tau(\mathbf{C}_\mathrm{in},\mathbf{C}_\mathrm{out})$,
is defined as
\begin{equation} \label{eq:sumcapacity}
  C_\tau(\mathbf{C}_\mathrm{in},\mathbf{C}_\mathrm{out})
  \coloneqq \max_{\mathscr{C}_\tau(\mathbf{C}_\mathrm{in},\mathbf{C}_\mathrm{out})}
  (R_1+R_2).
\end{equation}
For example, $\mathscr{C}_T(\mathbf{C}_\mathrm{in},\mathbf{C}_\mathrm{out})$ and
$C_T(\mathbf{C}_\mathrm{in},\mathbf{C}_\mathrm{out})$ denote the capacity region
and sum-capacity, respectively, of a MAC with a 
$(\mathbf{C}_\mathrm{in},\mathbf{C}_\mathrm{out})$-CF
and distributed causal state information available at the encoders.

\subsection{Message and State Cooperation}
\label{subsec:msgStateCoop}

In the scenario where non-causal state information is available at the encoders, we also
study the benefit of joint message and state cooperation. In the definition of a code 
for the case where non-causal state information is available at the encoders (Subsection 
\ref{subsec:msgCoop}), for $i\in\{1,2\}$, replace (\ref{eq:encCF}) and
(\ref{eq:cost}) with
\begin{equation*} 
  \varphi_\mathrm{in}^i: [2^{nR_i}]\times \mathcal{S}_i^n
  \rightarrow [2^{nC_\mathrm{in}^i}],\text{ and}
\end{equation*}
\begin{equation*} 
  \sum_{t=1}^n \mathbb{E}b_i\Big[f_{it}\big(w_i,
  \varphi_\mathrm{out}^i(\varphi_\mathrm{in}^1(w_1,S_1^n),\varphi_\mathrm{in}^2(w_2,S_2^n)),
  S_i^n\big)\Big]\leq B_i.
\end{equation*}
We denote the capacity region and sum-capacity with 
$\mathscr{C}_{\infty,s}(\mathbf{C}_\mathrm{in},\mathbf{C}_\mathrm{out})$
and $C_{\infty,s}(\mathbf{C}_\mathrm{in},\mathbf{C}_\mathrm{out})$, respectively.
The subscript ``$s$'' indicates the dependence of the cooperation strategy
on the channel state information.

\section{Coding Strategy} \label{sec:codeState}

Here we describe our coding strategies, which are based on random coding arguments. 
Since our aim is to determine conditions sufficient for an infinite 
slope cooperation gain, we specifically focus on coding
strategies that lead to large gains for small cooperation rates such
as the coordination strategy.\footnote{The coordination strategy
\cite{kUserMAC} is the adaptation of Marton's coding strategy
for the broadcast channel \cite{Marton} to the MAC with encoder
cooperation.} 
In particular, in the coding strategies below, 
the CF does not use its rate for 
forwarding message or state information \cite{kUserMAC}, 
since in the cases studied in the literature \cite{WillemsMAC, PermuterEtAl}, 
the gain of such a strategy is at most linear in the cooperation rate. 
We start with message cooperation and 
conclude with message and state cooperation. 

\subsection{Inner Bound for Message Cooperation}

For simplicity, we assume the CF has
access to both messages by setting 
$\mathbf{C}_\mathrm{in}=\mathbf{C}_\mathrm{in}^* 
=(C_\mathrm{in}^{*1},C_\mathrm{in}^{*2})$, where 
$C_\mathrm{in}^{*1}$ and $C_\mathrm{in}^{*2}$ are
sufficiently large. Despite this assumption, 
our main result regarding sum-capacity gain,
Theorem \ref{thm:mainMACwState}, holds for 
any $\mathbf{C}_\mathrm{in}\in\mathbb{R}^2_{>0}$. This is due to
the fact that using time-sharing, as stated in the lemma below,
we can use the inner bounds for $\mathbf{C}_\mathrm{in}^*$
to obtain inner bounds for any
$\mathbf{C}_\mathrm{in}\in\mathbb{R}^2_{>0}$. The proof
appears in Subsection \ref{subsec:CinMuCinStar}. 
\begin{lem} \label{lem:CinMuCinStar}
Consider a memoryless stationary MAC. For any
$(\mathbf{C}_\mathrm{in},\mathbf{C}_\mathrm{out})
\in\mathbb{R}^2_{>0}\times\mathbb{R}^2_{\geq 0}$, 
there exists $\mu>0$, depending only on $\mathbf{C}_\mathrm{in}$,
such that for all $\tau\in\{0,T-1,T,\infty\}$,
\begin{equation*}
  C_\tau(\mathbf{C}_\mathrm{in},
  \mathbf{C}_\mathrm{out})
  -C_\tau(\mathbf{C}_\mathrm{in},\mathbf{0})
  \geq
  \mu\Big(C_\tau(\mathbf{C}_\mathrm{in}^*,
  \mathbf{C}_\mathrm{out})
  -C_\tau(\mathbf{C}_\mathrm{in}^*,\mathbf{0})\Big).
\end{equation*}  
\end{lem}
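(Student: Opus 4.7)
The plan is to prove the lemma by a time-sharing argument. The idea is to use a fraction $\mu$ of the blocklength to run a good ``cooperation-enhanced'' code designed for the problem with the larger CF input capacities $\mathbf{C}_\mathrm{in}^*$, and to use the remaining $1-\mu$ fraction to run a good no-cooperation code. The composite code is a legal $(\mathbf{C}_\mathrm{in},\mathbf{C}_\mathrm{out})$-CF code provided $\mu$ is chosen small enough that the available CF-input bits suffice for the cooperation phase.

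Concretely, I would set
\[
  \mu \;:=\; \min\Bigl\{1,\; C_\mathrm{in}^1/C_\mathrm{in}^{*1},\; C_\mathrm{in}^2/C_\mathrm{in}^{*2}\Bigr\},
\]
which is strictly positive because $\mathbf{C}_\mathrm{in}\in\mathbb{R}^2_{>0}$, and depends only on $\mathbf{C}_\mathrm{in}$ since $\mathbf{C}_\mathrm{in}^*$ is a fixed reference throughout the paper. The case $\mu=1$ is immediate, so assume $0<\mu<1$. For any $\epsilon>0$ and sufficiently large $n$, pick a blocklength-$\lceil \mu n\rceil$ code $\mathcal{C}_1$ for the MAC with a $(\mathbf{C}_\mathrm{in}^*,\mathbf{C}_\mathrm{out})$-CF that achieves sum-rate within $\epsilon$ of $C_\tau(\mathbf{C}_\mathrm{in}^*,\mathbf{C}_\mathrm{out})$, and a blocklength-$(n-\lceil \mu n\rceil)$ no-cooperation code $\mathcal{C}_2$ that achieves sum-rate within $\epsilon$ of $C_\tau(\mathbf{C}_\mathrm{in}^*,\mathbf{0})=C_\tau(\mathbf{C}_\mathrm{in},\mathbf{0})$; these two quantities coincide because, with zero CF output, the CF input capacity is irrelevant. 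Concatenating $\mathcal{C}_1$ on channel uses $1,\ldots,\lceil \mu n\rceil$ and $\mathcal{C}_2$ on the remaining uses yields a composite code whose CF-input bits on edge $i$ total at most $\lceil \mu n\rceil C_\mathrm{in}^{*i}\leq n C_\mathrm{in}^i$ by the choice of $\mu$, whose CF-output bits total at most $\lceil \mu n\rceil C_\mathrm{out}^i\leq n C_\mathrm{out}^i$ trivially, and whose average cost per symbol is the convex combination of the two sub-codes' average costs and hence $\leq B_i$. The state-causality constraint $\tau$ is preserved segment-by-segment: the state information available to each sub-code's encoders in its standalone definition is a subset of what is available in the composite scheme.

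Passing $n\to\infty$ and then $\epsilon\to 0$, and summing the two sub-code sum-rates with weights $\mu$ and $1-\mu$, yields
\[
  C_\tau(\mathbf{C}_\mathrm{in},\mathbf{C}_\mathrm{out})
  \;\geq\;
  \mu\,C_\tau(\mathbf{C}_\mathrm{in}^*,\mathbf{C}_\mathrm{out})
  \;+\;(1-\mu)\,C_\tau(\mathbf{C}_\mathrm{in},\mathbf{0}).
\]
Subtracting $C_\tau(\mathbf{C}_\mathrm{in},\mathbf{0})=C_\tau(\mathbf{C}_\mathrm{in}^*,\mathbf{0})$ from both sides produces exactly the claimed inequality for every $\tau\in\{0,T-1,T,\infty\}$.

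I anticipate the main delicacy to be detailed bookkeeping rather than a conceptual obstacle: one must verify that the two sub-codes can be operated sequentially without interfering (messages, states, and CF transmissions are partitioned into their respective segments), and that the total error event decomposes into the two sub-code error events so that the overall error probability vanishes. The choice of $\mu$ enforcing the CF input-capacity budget is the only point where the quantitative relation between $\mathbf{C}_\mathrm{in}$ and $\mathbf{C}_\mathrm{in}^*$ enters, and it is precisely what makes $\mu$ independent of $\mathbf{C}_\mathrm{out}$.
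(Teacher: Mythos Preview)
Your proposal is correct and follows essentially the same time-sharing argument as the paper: the paper likewise picks $\mu\in(0,1)$ with $C_\mathrm{in}^i\geq\mu C_\mathrm{in}^{*i}$, invokes time-sharing to get $C_\tau(\mathbf{C}_\mathrm{in},\mathbf{C}_\mathrm{out})\geq \mu\,C_\tau(\mathbf{C}_\mathrm{in}^*,\mathbf{C}_\mathrm{out})+(1-\mu)\,C_\tau(\mathbf{0},\mathbf{C}_\mathrm{out})$, and then uses $C_\tau(\mathbf{0},\mathbf{C}_\mathrm{out})=C_\tau(\mathbf{0},\mathbf{0})=C_\tau(\mathbf{C}_\mathrm{in},\mathbf{0})=C_\tau(\mathbf{C}_\mathrm{in}^*,\mathbf{0})$ to conclude. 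The only cosmetic difference is that the paper places the second segment in $\mathscr{C}_\tau(\mathbf{0},\mathbf{C}_\mathrm{out})$ rather than $\mathscr{C}_\tau(\mathbf{C}_\mathrm{in},\mathbf{0})$, but these coincide, and your operational description simply unpacks what the paper states as a one-line capacity-region inclusion.
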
 

We first describe our inner bound for the case where the 
encoders do not have access to state information. In
this case, even though the decoder has access to full
state information, we can obtain a suitable inner bound
by applying results where state information is absent
at both the encoders and the decoder to a modified channel.
Specifically, applying
\cite[Theorem 1]{kUserMAC} to the channel
\begin{equation*}
  \Big(\mathcal{X}_1\times\mathcal{X}_2,
  p(y,s|x_1,x_2),\mathcal{Y}\times\mathcal{S}\Big),
\end{equation*}
where
\begin{equation*}
  p(y,s|x_1,x_2)=p(s)p(y|s,x_1,x_2),
\end{equation*}
gives an inner bound for the channel
$p(y|s,x_1,x_2)$ when full state information is 
available at the decoder. We note that 
applying Lemma \ref{lem:IBnoState} 
together with the outer bound
presented in Subsection \ref{subsec:outerBoundsState}
gives the capacity region in the absence of cooperation
$(\mathbf{C}_\mathrm{out}=\mathbf{0})$ both in the case
where no state information is available at the encoders
and in the case where the state information available
at the encoders is strictly causal. 
\begin{lem} \label{lem:IBnoState}
The set of all rate pairs $(R_1,R_2)$ satisfying 
\begin{align*}
  R_1 &\leq I(X_1;Y|S_1,S_2,X_2)\\
  R_2 &\leq I(X_2;Y|S_1,S_2,X_1)\\
  R_1+R_2 &\leq I(X_1,X_2;Y|S_1,S_2)
\end{align*}
for some distribution $p(x_1)p(x_2)$ with  
\begin{equation*}
  I(X_1;X_2)\leq C_\mathrm{out}^1+C_\mathrm{out}^2
\end{equation*}
and $\mathbb{E}[b_i(X_i)]\leq B_i$ for $i\in\{1,2\}$, 
is contained in 
$\mathscr{C}_0(\mathbf{C}_\mathrm{in}^*,\mathbf{C}_\mathrm{out})$. 
\end{lem}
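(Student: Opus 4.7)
The plan is to leverage the reduction sketched in the paragraph immediately preceding the lemma: when the decoder observes the full state sequence $S^n$ and the encoders use no state information, the state-dependent MAC $p(y|s,x_1,x_2)$ behaves exactly like the \emph{stateless} MAC $\widetilde{p}(y,s|x_1,x_2)\coloneqq p(s)p(y|s,x_1,x_2)$ whose output alphabet is $\mathcal{Y}\times\mathcal{S}$. First I would make this equivalence formal: every $(2^{nR_1},2^{nR_2},n)$-code for the original channel with $\tau=0$ is, symbol for symbol, a code for the modified channel (the encoders ignore state in both settings and the decoder receives $(S^n,Y^n)$ in both), and conversely. Hence $\mathscr{C}_0(\mathbf{C}_\mathrm{in}^*,\mathbf{C}_\mathrm{out})$ coincides with the capacity region of the stateless MAC $\widetilde{p}$ with a $(\mathbf{C}_\mathrm{in}^*,\mathbf{C}_\mathrm{out})$-CF.

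Given the equivalence, the next step is to invoke \cite[Theorem 1]{kUserMAC}, the coordination-strategy inner bound for a stateless MAC with a CF, applied to $\widetilde{p}(y,s|x_1,x_2)$. With $\mathbf{C}_\mathrm{in}=\mathbf{C}_\mathrm{in}^*$ sufficiently large, that theorem yields achievability of every rate pair satisfying
\begin{align*}
  R_1 &\leq I(X_1;Y,S\mid X_2),\\
  R_2 &\leq I(X_2;Y,S\mid X_1),\\
  R_1+R_2 &\leq I(X_1,X_2;Y,S),
\end{align*}
for some product distribution $p(x_1)p(x_2)$ satisfying $I(X_1;X_2)\leq C_\mathrm{out}^1+C_\mathrm{out}^2$ together with the cost constraints $\mathbb{E}[b_i(X_i)]\leq B_i$.

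Finally, I would simplify each mutual information using the fact that under the joint distribution $p(s)p(x_1)p(x_2)p(y|s,x_1,x_2)$ the state $S=(S_1,S_2)$ is independent of $(X_1,X_2)$. Chain-rule expansions such as $I(X_1;Y,S\mid X_2)=I(X_1;S\mid X_2)+I(X_1;Y\mid S,X_2)$ collapse the state term to zero and replace $S$ by $(S_1,S_2)$ in the conditioning; analogous simplifications for the other two bounds reproduce the three inequalities in the statement exactly.

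No serious obstacle is anticipated, since the entire argument is a reduction followed by routine mutual-information identities. The only point requiring minor care is checking that the input distribution in \cite[Theorem 1]{kUserMAC} is of product form $p(x_1)p(x_2)$ and that the cooperation constraint appears as $I(X_1;X_2)\leq C_\mathrm{out}^1+C_\mathrm{out}^2$ rather than through auxiliary random variables; both are inherited directly from the cited theorem, so nothing new needs to be proved here.
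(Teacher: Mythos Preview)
Your proposal is correct and follows exactly the argument the paper gives in the paragraph preceding the lemma: reduce the state-dependent MAC with decoder-only state to the stateless MAC $\widetilde{p}(y,s|x_1,x_2)=p(s)p(y|s,x_1,x_2)$ with enlarged output alphabet $\mathcal{Y}\times\mathcal{S}$, apply \cite[Theorem~1]{kUserMAC}, and simplify using the independence of $S$ and $(X_1,X_2)$. The only cosmetic slip is calling $p(x_1)p(x_2)$ a ``product distribution'' while imposing $I(X_1;X_2)\leq C_\mathrm{out}^1+C_\mathrm{out}^2$; the intended object is a joint law $p(x_1,x_2)$ (with marginals used for codebook generation), as in the cited coordination strategy.
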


In the case where the encoders have access to causal state information,
the codeword transmitted by an encoder can depend both on its 
message and the present state information. 
Lemma \ref{lem:IBcausal} provides an inner bound for the capacity region
in this scenario. In this inner bound, for $i\in\{1,2\}$, $U_i$ encodes
the message of encoder $i$ in addition to the information it receives
from the CF. Note that this inner bound is tight when 
$\mathbf{C}_\mathrm{out}=\mathbf{0}$, even if \emph{non-causal}
state information is available at the encoders. (See Subsection
\ref{subsec:IBcausal} for the proof of this lemma and Subsection
\ref{subsec:outerBoundsState} for the corresponding outer bound
in the absence of cooperation.)
\begin{lem} \label{lem:IBcausal}
The set of all rate pairs satisfying 
\begin{align*}
  R_1 &\leq I(U_1;Y|S_1,S_2,U_2)\\
  R_2 &\leq I(U_2;Y|S_1,S_2,U_1)\\
  R_1+R_2 &\leq I(U_1,U_2;Y|S_1,S_2)
\end{align*}
for some distribution $p(u_1,u_2)p(x_1|u_1,s_1)p(x_2|u_2,s_2)$
with\footnote{As we show in Subsection \ref{subsec:IBcausal},
we can choose $X_1$ and $X_2$ to be deterministic functions
of $(U_1,S_1)$ and $(U_2,S_2)$, respectively.}  
\begin{equation*}
  I(U_1;U_2)\leq C_\mathrm{out}^1+C_\mathrm{out}^2
\end{equation*}
and $\mathbb{E}[b_i(X_i)]\leq B_i$ for $i\in\{1,2\}$, 
is contained in 
$\mathscr{C}_T(\mathbf{C}_\mathrm{in}^*,\mathbf{C}_\mathrm{out})$. 
\end{lem}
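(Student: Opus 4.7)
The plan is to adapt the coordination (Marton-style) strategy of \cite[Theorem 1]{kUserMAC} to the state-dependent setting by letting the auxiliary variables $U_1,U_2$ play the role that the channel inputs $X_1,X_2$ play in the stateless case, and using Shannon strategies to convert each $U$-codeword into a channel input on the fly. At a high level, the CF will use its limited output rate to \emph{correlate} the codewords $u_1^n$ and $u_2^n$, and each encoder will then form $X_{it}=\phi_i(U_{it},S_{it})$ at each time $t$ using its causally-observed state.

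For the construction, fix a joint distribution $p(u_1,u_2)\,p(x_1|u_1,s_1)\,p(x_2|u_2,s_2)$ satisfying the hypotheses. By the functional representation lemma, write $X_i=\phi_i(U_i,S_i,Z_i)$ with $Z_i$ independent of $(U_1,U_2,S_1,S_2)$; absorbing $Z_i$ into an enlarged $U_i$ both establishes the footnote's deterministic-function claim and leaves every mutual information in the statement unchanged. For each $i\in\{1,2\}$ and each message $w_i\in[2^{nR_i}]$ I would generate $2^{nC_\mathrm{out}^i}$ sequences $u_i^n(w_i,v_i)$ i.i.d.\ according to $\prod_t p(u_{it})$. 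Given $(w_1,w_2)$, which the CF obtains because $\mathbf{C}_\mathrm{in}=\mathbf{C}_\mathrm{in}^*$ is large, the CF searches for a pair $(v_1,v_2)$ such that $(u_1^n(w_1,v_1),u_2^n(w_2,v_2))$ is jointly $\epsilon$-typical under $p(u_1,u_2)$ and sends $v_i$ to encoder $i$; the mutual covering lemma guarantees success with high probability whenever $C_\mathrm{out}^1+C_\mathrm{out}^2 > I(U_1;U_2)+\delta(\epsilon)$. At time $t$, encoder $i$ transmits $X_{it}=\phi_i(u_{it}(w_i,v_i),S_{it})$, meeting the cost constraint on average and so, after standard expurgation, in the strict sense. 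The decoder, holding $(S^n,Y^n)$, declares the unique $(\hat w_1,\hat w_2)$ for which some $(v_1,v_2)$ renders $(u_1^n(\hat w_1,v_1),u_2^n(\hat w_2,v_2),S_1^n,S_2^n,Y^n)$ jointly typical.

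Since $(U_1,U_2)$ is generated independently of $(S_1,S_2)$, the state pair serves as shared side information at the decoder. Partitioning error events by which of $\hat w_1,\hat w_2$ is wrong, and by whether the competing bin index equals the true one, reduces the packing analysis to the standard MAC-with-side-information computation; after absorbing the covering slack $C_\mathrm{out}^1+C_\mathrm{out}^2-I(U_1;U_2)$, the three rate bounds stated in the lemma emerge. The main delicate point, which I expect to carry over essentially verbatim from \cite[Theorem 1]{kUserMAC}, is that the transmitted pair $(u_1^n(w_1,v_1^\star),u_2^n(w_2,v_2^\star))$ is \emph{selected} to be jointly typical rather than chosen independently, so competing codewords are only conditionally independent of the received signals; once that dependency is handled exactly as in the stateless CF proof, the only structural change here is that conditioning on $(S_1,S_2)$ at the decoder turns each unconditional packing exponent into its state-conditioned form, which is precisely how the bounds above differ from those of Lemma \ref{lem:IBnoState}.
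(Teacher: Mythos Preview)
Your proposal is correct and rests on the same ingredients as the paper's proof---Shannon strategies on top of the coordination scheme of \cite[Theorem~1]{kUserMAC}---but the paper packages the argument differently. Rather than rebuilding the random code and re-running the covering/packing analysis, the paper defines the auxiliary channel
\[
  p(y|s,u_1,u_2)=\sum_{x_1,x_2} p(y|s,x_1,x_2)\,\mathbf{1}\{x_1=f_1(u_1,s_1)\}\,\mathbf{1}\{x_2=f_2(u_2,s_2)\}
\]
and applies Lemma~\ref{lem:IBnoState} (the no-state-at-encoders inner bound) directly to it; any code for this channel is then converted to a causal-state code for the original channel by letting encoder~$i$ output $f_i(U_{it},S_{it})$ at time~$t$. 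This reduction avoids revisiting the ``selected versus independent'' subtlety you flag, since that is already absorbed in Lemma~\ref{lem:IBnoState}. Your direct construction is more self-contained but duplicates that work; the paper's route is shorter and makes the dependence on the stateless result explicit.

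One small inaccuracy: when you absorb the auxiliary randomness $Z_i$ into $U_i$, the mutual informations in the rate constraints are not ``unchanged''---they can only increase (e.g.\ $I(U_1,Z_1;Y|S,U_2,Z_2)\ge I(U_1;Y|S,U_2)$), while $I(U_1';U_2')=I(U_1;U_2)$ stays the same. That monotonicity is exactly what is needed for the deterministic-mapping claim in the footnote, and the paper verifies it explicitly; your conclusion is right even though the wording is slightly off.
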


\subsection{Inner Bound for Message and State Cooperation}

As discussed in Subsection \ref{subsec:msgStateCoop}, 
we only consider
message and state cooperation in the scenario 
where non-causal
state information is available at the encoders. 

Here we assume that the state alphabet 
$\mathcal{S}=\mathcal{S}_1\times\mathcal{S}_2$
is discrete and $H(S_1,S_2)$ is finite. Furthermore,
we assume the CF not only has access to both messages, 
but also knows the state sequences $S_1^n$ and $S_2^n$;
equivalently, we set
$\mathbf{C}_\mathrm{in}=\mathbf{\bar{C}}_\mathrm{in}
=(\bar{C}_\mathrm{in}^1,\bar{C}_\mathrm{in}^2)$, where 
$\bar{C}_\mathrm{in}^1$ and $\bar{C}_\mathrm{in}^2$ are  
sufficiently large. A lemma analogous to Lemma 
\ref{lem:CinMuCinStar} holds in this case.

\begin{lem} 
Fix a memoryless stationary MAC. For any
$(\mathbf{C}_\mathrm{in},\mathbf{C}_\mathrm{out})
\in\mathbb{R}^2_{>0}\times\mathbb{R}^2_{\geq 0}$, 
there exists $\mu>0$, depending only on $\mathbf{C}_\mathrm{in}$,
such that 
\begin{equation*}
 C_{(\infty,s)}(\mathbf{C}_\mathrm{in},
 \mathbf{C}_\mathrm{out})
 -C_{(\infty,s)}(\mathbf{C}_\mathrm{in},\mathbf{0})
 \geq
 \mu\Big(C_{(\infty,s)}(\mathbf{\bar{C}}_\mathrm{in},
 \mathbf{C}_\mathrm{out})
 -C_{(\infty,s)}(\mathbf{\bar{C}}_\mathrm{in},\mathbf{0})\Big),
\end{equation*}  
\end{lem}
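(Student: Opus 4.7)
The plan is to mirror the proof of Lemma \ref{lem:CinMuCinStar} (deferred to Subsection \ref{subsec:CinMuCinStar}), modifying it in the single place where the message-and-state model differs: the CF input edge must now carry both message information and a description of the state sequence, so the $nC_{\mathrm{in}}^i$-bit budget has to be split between the two. For given $\mathbf{C}_{\mathrm{out}}$ I begin by fixing a sequence of codes for $(\mathbf{\bar{C}}_{\mathrm{in}}, \mathbf{C}_{\mathrm{out}})$ whose individual rates $(R_1^*, R_2^*)$ have sum approaching $C_{(\infty,s)}(\mathbf{\bar{C}}_{\mathrm{in}}, \mathbf{C}_{\mathrm{out}})$.

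I then time-share: in a block of length $n$, the first $n' = \lfloor \mu n \rfloor$ channel uses run the $\mathbf{\bar{C}}_{\mathrm{in}}$-code with the CF active, and the remaining $(1-\mu)n$ uses run a near-optimal no-cooperation scheme. During the cooperation prefix, encoder $i$ must forward to the CF both the $\lceil n' R_i^* \rceil$-bit index of the prefix sub-message and a description of the prefix state sequence $S_i^{n'}$. Because $S_1, S_2$ are discrete with finite joint entropy and the state process is memoryless, Shannon source coding describes $S_i^{n'}$ in at most $n'(H(S_i)+\varepsilon)$ bits with vanishing error, so the total load on edge $i$ is about $\mu n(R_i^* + H(S_i))$. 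Since the channel and cost functions are fixed, $R_i^*$ is uniformly bounded above by a channel-dependent constant $R_{\max}$, so taking $\mu = \min_i C_{\mathrm{in}}^i/(R_{\max} + H(S_i) + 1)$ meets the edge budget, and this choice depends only on $\mathbf{C}_{\mathrm{in}}$ (and the fixed channel).

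The composite sum-rate is then at least $\mu(R_1^* + R_2^*) + (1-\mu)\,C_{(\infty,s)}(\mathbf{C}_{\mathrm{in}}, \mathbf{0}) - o(1)$; passing to the limit in the underlying code sequences and using the identity $C_{(\infty,s)}(\mathbf{C}_{\mathrm{in}}, \mathbf{0}) = C_{(\infty,s)}(\mathbf{\bar{C}}_{\mathrm{in}}, \mathbf{0})$ (which holds because $\mathbf{C}_{\mathrm{out}} = \mathbf{0}$ silences the CF and renders its input knowledge irrelevant) yields the claimed inequality. The main step I expect to require care is the joint use of the CF input edge: I must concatenate the message index and the compressed state into a single index in $[2^{nC_{\mathrm{in}}^i}]$, arrange matters so the CF separates them unambiguously, and fold the source-coding failure event into the overall error probability without degrading the sum-rate conclusion. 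Once that bookkeeping is in place, the argument reduces to a standard time-sharing estimate.
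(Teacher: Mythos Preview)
Your approach is correct and rests on the same time-sharing idea as the paper's (implicit) proof; the paper states this lemma as ``analogous to Lemma~\ref{lem:CinMuCinStar}'' and gives no separate argument, so the intended proof is the one in Subsection~\ref{subsec:CinMuCinStar} with $\mathbf{C}_\mathrm{in}^*$ replaced by $\mathbf{\bar C}_\mathrm{in}$. The one place you deviate is in unpacking what $\mathbf{\bar C}_\mathrm{in}$ actually buys: you compress $S_i^{n'}$ via Shannon source coding to roughly $n'H(S_i)$ bits and concatenate with the message index, then track the resulting load against $nC_\mathrm{in}^i$. The paper's route is more direct. Since $\mathbf{\bar C}_\mathrm{in}$ is by definition a \emph{fixed} finite pair (large enough that the raw index of $(w_i,S_i^{n'})$ already fits in $n'\bar C_\mathrm{in}^i$ bits), one simply chooses $\mu\in(0,1)$ with $C_\mathrm{in}^i\geq\mu\,\bar C_\mathrm{in}^i$ for $i\in\{1,2\}$ and invokes the capacity-region inclusion
\[
\mathscr{C}_{(\infty,s)}(\mathbf{C}_\mathrm{in},\mathbf{C}_\mathrm{out})
\;\supseteq\; \mu\,\mathscr{C}_{(\infty,s)}(\mathbf{\bar C}_\mathrm{in},\mathbf{C}_\mathrm{out})
+(1-\mu)\,\mathscr{C}_{(\infty,s)}(\mathbf{0},\mathbf{C}_\mathrm{out})
\]
verbatim, followed by the same chain of sum-capacity identities. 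Your version earns a marginally larger admissible $\mu$ (through $H(S_i)$ rather than $\log|\mathcal S_i|$), which is irrelevant to the conclusion, at the cost of the concatenation and source-coding-failure bookkeeping you flag as the ``main step requiring care.'' That bookkeeping is entirely routine, but in the paper's formulation it never arises at all.
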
 

We next describe our coding strategy for the MAC with
message and state cooperation. 

\textbf{Codebook Generation.} 
Choose a distribution $p(x_1,x_2|s_1,s_2)$. 
For $i\in\{1,2\}$, $w_i\in [2^{nR_i}]$, 
$z_i\in [2^{nC_\mathrm{out}^i}]$, $s_i^n\in\mathcal{S}_i^n$, generate
$X_i^n(w_i,z_i|s_i^n)$ i.i.d.\ according to the distribution
\begin{equation*}
  \pr\Big\{X_i^n(w_i,z_i|s_i^n)=x_i^n\Big| S_i^n=s_i^n\Big\}
  =\prod_{t=1}^n p(x_{it}|s_{it}).
\end{equation*}

\textbf{Encoding.} The CF, having access to $(w_1,w_2)$ 
and $(S_1^n,S_2^n)$, looks for a pair 
$(Z_1,Z_2)\in [2^{nC_\mathrm{out}^1}]\times[2^{nC_\mathrm{out}^2}]$ satisfying
\begin{equation} \label{eq:typical}
  \Big(S_1^n,S_2^n,X_1^n(w_1,Z_1|S_1^n),
  X_2^n(w_2,Z_2|S_2^n)\Big)\in A_\delta^{(n)},
\end{equation}
where $A_\delta^{(n)}$ is the weakly typical set with respect to the distribution
$p(s_1,s_2)p(x_1,x_2|s_1,s_2)$. 
If there is more than one such pair, the CF chooses the smallest pair
according to the lexicographical order. If there is no such pair, it sets
$(Z_1,Z_2)=(1,1)$. 
The CF sends $Z_i$ to encoder $i$ for $i\in\{1,2\}$.  
Encoder $i$ transmits $X_i^n(w_i,Z_i|S_i^n)$ over $n$ uses of the channel.

Using \cite[p. 130, Lemma A.1.1]{Noorzad}, it follows
that as $n$ goes to infinity, the probability that a pair 
$(Z_1,Z_2)$ satisfying (\ref{eq:typical}) exists goes
to one if
\begin{align*}
  C_\mathrm{out}^1 &>
  H(X_1|S_1)-H(X_1|S_1,S_2)+24\delta\\
  C_\mathrm{out}^2 &>
  H(X_2|S_2)-H(X_2|S_1,S_2)+24\delta\\
  C_\mathrm{out}^1+C_\mathrm{out}^2 &>
  H(X_1|S_1)+H(X_2|S_2)-H(X_1,X_2|S_1,S_2)
  +6\delta.
\end{align*}

\textbf{Decoding.} Once the decoder receives $Y^n$, using $(S_1^n,S_2^n)$,
it looks for a pair $(\widehat{w}_1,\widehat{w}_2)$ that satisfies
\begin{equation*}
  \Big(S_1^n,S_2^n,X_1^n(\widehat{w}_1,\widehat{Z}_1|S_1^n),
  X_2^n(\widehat{w}_2,\widehat{Z}_2|S_2^n),Y^n\Big)
  \in A_\epsilon^{(n)}.
\end{equation*}
Here $A_\epsilon^{(n)}$ is the weakly typical set with respect to the distribution
$p(s_1,s_2)p(x_1,x_2|s_1,s_2)p(y|s_1,s_2,x_1,x_2)$.
If there is no such pair, or there is such a pair but it is not unique, the
decoder sets $(\widehat{w}_1,\widehat{w}_2)=(1,1)$.

The error analysis of the above coding scheme leads 
to the following lemma, which provides an inner bound for 
$\mathscr{C}_{\infty,s}(\mathbf{\bar{C}}_\mathrm{in},\mathbf{C}_\mathrm{out})$.
\begin{lem} \label{lem:IBnoncausal}
The set of all rate pairs satisfying 
\begin{align*}
  R_1 &\leq I(X_1;Y|S_1,S_2,X_2)\\
  R_2 &\leq I(X_2;Y|S_1,S_2,X_1)\\
  R_1+R_2 &\leq I(X_1,X_2;Y|S_1,S_2)
\end{align*}
for some distribution $p(x_1,x_2|s_1,s_2)$ with  
\begin{align*}
  C_\mathrm{out}^1 &\geq H(X_1|S_1)-H(X_1|S_1,S_2)\\
  C_\mathrm{out}^2 &\geq H(X_2|S_2)-H(X_2|S_1,S_2)\\ 
  C_\mathrm{out}^1+C_\mathrm{out}^2
  &\geq H(X_1|S_1)+H(X_2|S_2)-H(X_1,X_2|S_1,S_2)
\end{align*}
and $\mathbb{E}[b_i(X_i)]\leq B_i$ for $i\in\{1,2\}$, 
is contained in 
$\mathscr{C}_{\infty,s}(\mathbf{\bar{C}}_\mathrm{in},\mathbf{C}_\mathrm{out})$. 
\end{lem}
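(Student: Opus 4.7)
The plan is to carry out the error analysis for the random code already described in the text, combining a Marton-style mutual covering step (for the CF's selection of $(Z_1, Z_2)$) with a joint-typicality decoding analysis, and then to eliminate $C_\mathrm{out}^1$ and $C_\mathrm{out}^2$ against the three covering inequalities to recover the stated rate region.

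First I would handle the CF's covering step. Given jointly typical $(S_1^n, S_2^n)$, the CF searches for $(Z_1, Z_2)$ for which $(S_1^n, S_2^n, X_1^n(w_1, Z_1|S_1^n), X_2^n(w_2, Z_2|S_2^n))$ is jointly typical under $p(s_1,s_2)p(x_1,x_2|s_1,s_2)$. The three inequalities on $C_\mathrm{out}^1, C_\mathrm{out}^2$ displayed above the lemma (with slack $\delta$) are exactly the hypotheses of the generalized mutual-covering lemma \cite[Lemma A.1.1]{Noorzad}, so the probability of covering failure vanishes. Conditioned on this event, the channel output extends the typical tuple to $(S_1^n, S_2^n, X_1^n, X_2^n, Y^n) \in A_\epsilon^{(n)}$ with high probability, so the true message pair remains a consistent candidate for the decoder.

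Next, I would bound the decoder error by a union bound over wrong candidates $(\hat{w}_1, \hat{w}_2, \hat{z}_1, \hat{z}_2)$ with $(\hat{w}_1, \hat{w}_2)\neq(w_1, w_2)$, partitioned by which of $\hat{w}_i \neq w_i$ and by whether $\hat{z}_j$ equals $Z_j$. The key point is that a codeword $X_i^n(\hat{w}_i, \hat{z}_i|S_i^n)$ with $\hat{w}_i \neq w_i$ is generated independently of the CF's choice $(Z_1, Z_2)$ given $(S_1^n, S_2^n)$, since the CF looks only at codewords indexed by the true message pair; such codewords behave like i.i.d.\ samples from $\prod p(x_i|s_i)$ conditional on $S_i^n$. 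Packing-style computations then yield, for the single-wrong events, the constraint $R_1+C_\mathrm{out}^1 < I(X_1; S_2, X_2, Y|S_1)$ and its $(1\leftrightarrow 2)$ symmetric analogue, and for the pair-wrong events, $R_1+R_2+C_\mathrm{out}^1+C_\mathrm{out}^2 < H(X_1|S_1)+H(X_2|S_2)-H(X_1, X_2|S_1, S_2, Y)$; the various ``cross'' subcases are dominated by one of these three.

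Finally, I would eliminate $C_\mathrm{out}^1, C_\mathrm{out}^2$ by substituting their covering floors. Plugging $C_\mathrm{out}^1 = H(X_1|S_1)-H(X_1|S_1,S_2)$ into the single-wrong bound yields $R_1 < H(X_1|S_1,S_2) - H(X_1|S_1,S_2,X_2,Y) = I(X_1; X_2, Y|S_1, S_2)$, which implies the stated $R_1 \leq I(X_1; Y|S_1, S_2, X_2)$; the symmetric substitution handles $R_2$, and using $C_\mathrm{out}^1+C_\mathrm{out}^2 = H(X_1|S_1)+H(X_2|S_2)-H(X_1,X_2|S_1,S_2)$ in the pair-wrong bound collapses it to $R_1+R_2 \leq I(X_1,X_2; Y|S_1,S_2)$. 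Letting $\delta \downarrow 0$ and invoking closure completes the argument. The main obstacle is the careful enumeration of decoder error subcases together with the justification that codewords indexed by a wrong message are conditionally independent of $(Z_1, Z_2)$; once these are in place, the derivation of the three rate bounds reduces to standard random-coding bookkeeping.
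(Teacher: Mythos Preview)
Your proposal is correct and follows exactly the approach the paper intends: the paper presents the random code and the covering conditions, then simply states that ``the error analysis of the above coding scheme leads to the following lemma,'' and what you outline is precisely that standard error analysis (mutual-covering for the CF step via the cited Lemma A.1.1, joint-typicality decoding, union bound over wrong $(\hat w_1,\hat w_2,\hat z_1,\hat z_2)$, and substitution of the covering floors for the effective CF rates). Your elimination in fact yields the slightly larger individual bounds $R_i\leq I(X_i;X_{3-i},Y|S_1,S_2)$, which contain the lemma's $R_i\leq I(X_i;Y|S_1,S_2,X_{3-i})$, so the implication you draw is in the right direction.
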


\section{Main Result} \label{sec:resultMACwState}

Our main result describes conditions on a MAC
that, if satisfied,
for every fixed $\mathbf{C}_\mathrm{in}\in\mathbb{R}^2_{>0}$,
guarantee an infinite slope in sum-capacity as
a function of $\mathbf{C}_\mathrm{out}$. As sum-capacity depends
on the availability of state information at the encoders, so
do our conditions. The proof appears in 
Subsection \ref{subsec:mainNoncausal}. 

\begin{thm} \label{thm:mainMACwState}
Let $\mathcal{S}$, $\mathcal{X}_1$, $\mathcal{X}_2$,
and $\mathcal{Y}$ be finite sets. 
For any $\tau\in\{0,T-1,T,\infty,(\infty,s)\}$, any MAC in 
$\mathcal{C}_\tau(\mathcal{S},\mathcal{X}_1,\mathcal{X}_2,\mathcal{Y})$,
and any $(\mathbf{C}_\mathrm{in},\mathbf{v})
\in\mathbb{R}^2_{>0}\times\mathbb{R}^2_{>0}$,
\begin{equation*}
  \lim_{h\rightarrow 0^+}
  \frac{C_\tau(\mathbf{C}_\mathrm{in},h\mathbf{v})
  -C_\tau(\mathbf{C}_\mathrm{in},\mathbf{0})}{h}=\infty.
\end{equation*}
\end{thm}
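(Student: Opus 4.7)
My plan is a perturbation argument around an optimal no-cooperation distribution, extending the approach of \cite[Theorem 2]{kUserMAC} to handle each of the five state scenarios $\tau\in\{0,T-1,T,\infty,(\infty,s)\}$. By Lemma \ref{lem:CinMuCinStar} (and the analogous lemma stated in Subsection \ref{subsec:msgStateCoop} for the $(\infty,s)$ case), the gain at an arbitrary $\mathbf{C}_\mathrm{in}\in\mathbb{R}^2_{>0}$ is bounded below by a positive multiple of the gain at $\mathbf{C}_\mathrm{in}^*$ (respectively $\mathbf{\bar{C}}_\mathrm{in}$), so it suffices to exhibit the infinite slope at these ``large'' input-rate choices. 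At $\mathbf{C}_\mathrm{out}=\mathbf{0}$, the inner bounds in Lemmas \ref{lem:IBnoState}, \ref{lem:IBcausal}, and \ref{lem:IBnoncausal} together with the matching outer bounds of Subsection \ref{subsec:outerBoundsState} identify an optimal ``product'' distribution $p^*$: under $p^*$ the cooperation functional appearing in the relevant inner-bound lemma vanishes, meaning $I(X_1;X_2)=0$ for $\tau\in\{0,T-1\}$, $I(U_1;U_2)=0$ for $\tau\in\{T,\infty\}$, and $I(X_1;X_2\mid S_1,S_2)=0$ together with $I(X_i;S_{3-i}\mid S_i)=0$ for $\tau=(\infty,s)$.

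The heart of the proof is then a careful perturbation of $p^*$. I would construct a one-parameter family $\{p_t\}_{t\in[0,t_0]}$ (depending on a channel direction supplied by the defining hypothesis of $\mathcal{C}_\tau$) that preserves the individual input marginals, so that the cost constraints $\mathbb{E}[b_i(X_i)]\leq B_i$ continue to hold, and that satisfies two key estimates. First, the cooperation functional is $O(t^2)$ in a neighbourhood of $p^*$; this reflects the classical quadratic behaviour of KL divergence around a product distribution, and in the $(\infty,s)$ case it is applied separately to the conditional-on-$(S_1,S_2)$ marginal of $(X_1,X_2)$ and to each $(X_i,S_{3-i})\mid S_i$ marginal. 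Second, the achievable sum-rate under $p_t$ exceeds the zero-cooperation sum-capacity by at least $ct$ for some $c>0$, obtained from a first-order Taylor expansion of $I(X_1,X_2;Y\mid S_1,S_2)$ (or its $U$-analog) around $p^*$; the defining property of the class $\mathcal{C}_\tau$ is precisely what guarantees that the relevant first derivative is strictly positive in the chosen direction.

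Combining the two estimates, there is a constant $K>0$ such that $p_t$ is admissible by the inner bound at $\mathbf{C}_\mathrm{out}=h\mathbf{v}$ whenever $h\geq Kt^2$. Taking $h=Kt^2$ and plugging in the linear rate gain gives
\begin{equation*}
  \frac{C_\tau(\mathbf{C}_\mathrm{in}^*,h\mathbf{v})-C_\tau(\mathbf{C}_\mathrm{in}^*,\mathbf{0})}{h}\;\geq\;\frac{ct}{Kt^2}\;=\;\frac{c}{\sqrt{Kh}}\;\longrightarrow\;\infty
\end{equation*}
as $h\to 0^+$, which together with the reduction step completes the proof. The main obstacle is the perturbation construction in the $(\infty,s)$ case: the family $\{p_t\}$ must simultaneously create dependence between $X_1$ and $X_2$ given $(S_1,S_2)$ and between $X_i$ and $S_{3-i}$ given $S_i$, while keeping every term of the cooperation functional in Lemma \ref{lem:IBnoncausal} quadratic in $t$ and still yielding a linear first-order gain in $I(X_1,X_2;Y\mid S_1,S_2)$. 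Choosing the right direction in this coupled parameter space, and distilling the precise channel condition (which in turn defines $\mathcal{C}_{(\infty,s)}$) that makes the sum-rate derivative strictly positive, is where the real work lies; the other four scenarios should follow from the same calculation specialized to a constant (or fully $(S_1,S_2)$-conditioned) channel.
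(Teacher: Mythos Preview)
Your proposal follows essentially the same perturbative strategy as the paper: interpolate $p_\lambda=(1-\lambda)p_0+\lambda p_1$ using the $p_1$ supplied by the definition of $\mathcal{C}_\tau$, observe that the cooperation constraint has vanishing first derivative at $\lambda=0$ (this is where support condition (ii) enters) while the sum-rate has strictly positive first derivative (this is condition (i)), and conclude infinite slope.

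Two remarks on where you diverge. First, the paper does \emph{not} preserve input marginals along the perturbation; cost constraints are maintained simply because both $p_0$ and $p_1$ satisfy $\mathbb{E}[b_i(X_i)]\le B_i$ and the constraint is linear in the input law. Your insistence on marginal preservation is unnecessary and, taken literally, would require extra work to show that the marginal-preserving component of $p_1-p_0$ still yields a positive first-order gain in $I(X_1,X_2;Y\mid S)$. The $O(t^2)$ behaviour of $I(X_1;X_2)$ (and its conditional analogues) at a product point holds in \emph{any} simplex direction with the right support, not just marginal-preserving ones, so you can simply drop that restriction. Second, the paper replaces your explicit $O(t^2)$--$1/\sqrt{h}$ inversion with an equivalent $\epsilon$-trick: it adds a slack term $\epsilon\lambda$ to the constraint function so that its derivative at $0$ is $\epsilon>0$, applies the inverse function theorem to obtain $d\lambda^*/dh\big|_{0^+}=1/\epsilon$, deduces a slope lower bound proportional to $1/\epsilon$, and sends $\epsilon\to 0$. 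For $\tau\in\{0,T-1\}$ and $\tau\in\{T,\infty\}$ the paper further shortcuts by reducing to \cite[Theorem~3]{kUserMAC} applied to the channel $p(s,y\mid x_1,x_2)$ (respectively $p(y\mid s,u_1,u_2)$); only the $(\infty,s)$ case, with its three simultaneous constraints, receives the self-contained perturbation argument you outline.
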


We next specifically define 
$\mathcal{C}_\tau(\mathcal{S},\mathcal{X}_1,\mathcal{X}_2,\mathcal{Y})$
for each subscript $\tau\in\{0,T-1,T,\infty,(\infty,s)\}$;
as defined in Section \ref{sec:model}, $\tau$ specifies 
the availability of state information at the encoders. Note that the 
definition of $C_\tau$ provides \emph{sufficient} conditions for a 
large cooperation gain; these conditions may not be necessary. 

In our descriptions below, all mentioned distributions 
satisfy 
\begin{equation*}
  \mathbb{E}\big[b_i(X_i)\big]
  \leq B_i\text{ for }i\in\{1,2\}.
\end{equation*}

\subsection{Message Cooperation} 

In this subsection, we define classes of MACs which exhibit
a large message cooperation gain as described in Theorem
\ref{thm:mainMACwState}.

\textbf{No state information.} A MAC is in 
$\mathcal{C}_0(\mathcal{S},\mathcal{X}_1,\mathcal{X}_2,\mathcal{Y})$
if 

(i) for some $p_0(x_1)p_0(x_2)$ that satisfies
\begin{equation*}
  I_0(X_1,X_2;Y|S)
  =\max_{p(x_1)p(x_2)}I(X_1,X_2;Y|S),
\end{equation*}
there exists $p_1(x_1,x_2)$ that satisfies
\begin{equation*}
  I_1(X_1,X_2;Y|S)+
  \mathbb{E}\Big[D\big(p_1(y|S)\|p_0(y|S)\big)\Big]
  > I_0(X_1,X_2;Y|S),\text{ and}
\end{equation*}

(ii) $\mathrm{supp}(p_1(x_1,x_2))
\subseteq\mathrm{supp}(p_0(x_1)p_0(x_2))$, where 
``supp'' denotes the support.

Intuitively, condition (i) ensures that our channel has the 
property that the dependence created through message cooperation increases
sum-capacity. Condition (ii) allows the CF to use a small rate 
(i.e., small $\mathbf{C}_\mathrm{out}$) to help the encoders, whose
codewords are generated according to $p_0(x_1)p_0(x_2)$, to transmit
codewords whose distribution is sufficiently close to $p_1(x_1,x_2)$
to achieve a large gain in sum-capacity. 

\textbf{Strictly causal state information.}
As mentioned in Section \ref{sec:codeState},
the availability of strictly causal
state information at the encoders of a MAC without cooperation does not enlarge
the capacity region, thus we set  
$\mathcal{C}_{T-1}(\mathcal{S},\mathcal{X}_1,\mathcal{X}_2,\mathcal{Y})
\coloneqq \mathcal{C}_0(\mathcal{S},\mathcal{X}_1,\mathcal{X}_2,\mathcal{Y})$.

\textbf{Causal state information.}
A MAC is in 
$\mathcal{C}_T(\mathcal{S},\mathcal{X}_1,\mathcal{X}_2,\mathcal{Y})$
if 

(i) for some 
$p_0(x_1|s_1)p_0(x_2|s_2)$ that satisfies
\begin{equation*}
  I_0(X_1,X_2;Y|S)
  =\max_{p(x_1|s_1)p(x_2|s_2)}I(X_1,X_2;Y|S),
\end{equation*}
there exist alphabets $\mathcal{U}_1$, $\mathcal{U}_2$, 
distributions $p_0(u_1)p_0(u_2)$ and $p_1(u_1,u_2)$, and
mappings $f_i:\mathcal{U}_i\times\mathcal{S}_i
\rightarrow\mathcal{X}_i$ for $i\in\{1,2\}$ such that
\begin{align}
  &p_0(x_1|s_1)p_0(x_2|s_2)
  =\sum_{u_1,u_2}
  p_0(u_1)p_0(u_2)\mathbf{1}\big\{x_1=f_1(u_1,s_1)\big\}
  \mathbf{1}\big\{x_2=f_2(u_2,s_2)\big\},\label{eq:p0XSinTermsU}\\
  &I_1(U_1,U_2;Y|S)+
  \mathbb{E}\Big[D\big(p_1(y|S)\|p_0(y|S)\big)\Big]
  > I_0(U_1,U_2;Y|S),\text{ and}\label{eq:CTI1I0}
\end{align}

(ii)
$\mathrm{supp}(p_1(u_1,u_2))\subseteq\mathrm{supp}(p_0(u_1)p_0(u_2))$.

In (\ref{eq:CTI1I0}), the expressions are calculated using
the input distributions 
\begin{align*}
 &p_0(u_1)p_0(u_2)\mathbf{1}\{x_1=f_1(u_1,s_1)\}
  \mathbf{1}\{x_2=f_2(u_2,s_2)\},
 \text{ and}\\
 &p_1(u_1,u_2)\mathbf{1}\{x_1=f_1(u_1,s_1)\}
  \mathbf{1}\{x_2=f_2(u_2,s_2)\}.
\end{align*}

\textbf{Non-causal state information.}
In the absence of cooperation, the capacity region is the same
for regardless of whether the state information at the encoders
is causal or non-causal. Thus, similar to the strictly
causal case, we set
$\mathcal{C}_\infty(\mathcal{S},\mathcal{X}_1,\mathcal{X}_2,\mathcal{Y})
\coloneqq \mathcal{C}_T(\mathcal{S},\mathcal{X}_1,\mathcal{X}_2,\mathcal{Y})$.

\subsection{Message and State Cooperation}

Here we provide sufficient conditions for a large
gain resulting from message and state cooperation. 

\textbf{Non-causal state information.}
A MAC is in 
$\mathcal{C}_{\infty,s}(\mathcal{S},\mathcal{X}_1,\mathcal{X}_2,\mathcal{Y})$
if  

(i) for some $p_0(x_1|s_1)p_0(x_2|s_2)$ that satisfies
\begin{equation*}
  I_0(X_1,X_2;Y|S)
  =\max_{p(x_1|s_1)p(x_2|s_2)}I(X_1,X_2;Y|S),
\end{equation*}
there exists $p_1(x_1,x_2|s_1,s_2)$ that satisfies
\begin{equation*}
  I_1(X_1,X_2;Y|S)+
  \mathbb{E}\Big[D\big(p_1(y|S)\|p_0(y|S)\big)\Big] 
  > I_0(X_1,X_2;Y|S),\text{ and}
\end{equation*}

(ii) for all $(s_1,s_2)\in\mathcal{S}$,
$\mathrm{supp}(p_1(x_1,x_2|s_1,s_2))
\subseteq\mathrm{supp}(p_0(x_1|s_1)p_0(x_2|s_2))$.

\section{Example: Gaussian MAC with Binary Fading} \label{sec:example}

While Theorem \ref{thm:mainMACwState} is stated only for finite
alphabet MACs, the result is not limited to such MACs. Specifically,
for a given MAC, we can use our inner bounds 
described in Section \ref{sec:codeState} to calculate an inner 
bound for sum-capacity and verify the result of 
Theorem \ref{thm:mainMACwState} directly. We next describe
an example of such a MAC.

Consider a MAC that models the wireless 
communication between two encoders
and a decoder in the presence of binary fading.
The input-output relationship of this MAC is given by 
\begin{equation*}
  Y=S_1X_1+S_2X_2+Z,
\end{equation*}
where $(S_1,S_2)$ is uniformly distributed on
$\{0,1\}^2$, and $Z$ is a Gaussian random variable 
with mean zero and variance $N$. In addition, for $i\in\{1,2\}$
we set the cost function $b_i(x)=x^2$ and
cost constraint $B_i=P_i$, so that the cost
constraints correspond to the usual power constraints of the 
Gaussian MAC. 

\begin{prop} \label{prop:gaussianFading}
Consider the Gaussian MAC with binary fading. Fix 
$(\mathbf{C}_\mathrm{in},\mathbf{v})
\in\mathbb{R}^2_{>0}\times\mathbb{R}^2_{>0}$. Then
for all $\tau\in\{0,T-1,T,\infty,(\infty,s)\}$,
\begin{equation*}
  \lim_{h\rightarrow 0^+}
  \frac{C_\tau(\mathbf{C}_\mathrm{in},h\mathbf{v})
  -C_\tau(\mathbf{C}_\mathrm{in},\mathbf{0})}{h}=\infty.
\end{equation*}
\end{prop}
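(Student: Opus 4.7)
The plan is to apply the inner bounds of Section \ref{sec:codeState} with jointly Gaussian codewords and exploit the square-root scaling of the cooperation gain that is characteristic of the standard Gaussian MAC. By Lemma \ref{lem:CinMuCinStar} and its analog for message-and-state cooperation, it suffices to consider $\mathbf{C}_\mathrm{in}=\mathbf{C}_\mathrm{in}^*$ (respectively $\bar{\mathbf{C}}_\mathrm{in}$), so I assume throughout that the CF has full access to both messages (and, where applicable, to both state sequences).

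I first treat $\tau=0$ using Lemma \ref{lem:IBnoState} with zero-mean jointly Gaussian $(X_1,X_2)$, $\mathrm{Var}(X_i)=P_i$, and $\mathbb{E}[X_1X_2]=\rho\sqrt{P_1P_2}$ for small $\rho>0$. The power constraints hold, and $I(X_1;X_2)=-\tfrac{1}{2}\log(1-\rho^2)\sim\rho^2/(2\ln 2)$. Decomposing $I(X_1,X_2;Y|S_1,S_2)$ as an average over the four fading patterns, only $(S_1,S_2)=(1,1)$ is $\rho$-dependent and contributes $\tfrac{1}{8}\log\!\big(1+(P_1+P_2+2\rho\sqrt{P_1P_2})/N\big)$. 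A first-order Taylor expansion gives a sum-rate gain of $c\rho+O(\rho^2)$ with an explicit constant $c>0$. Setting $\rho=\Theta(\sqrt{h})$ with $h=C_\mathrm{out}^1+C_\mathrm{out}^2$ then yields a $\Theta(\sqrt{h})$ gain in the inner bound, establishing the infinite slope.

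The case $\tau=T-1$ is immediate: any $\tau=0$ code is also a $\tau=T-1$ code, and the baseline $C_{T-1}(\mathbf{C}_\mathrm{in},\mathbf{0})=C_0(\mathbf{C}_\mathrm{in},\mathbf{0})$ since strictly causal state does not enlarge the no-cooperation capacity. For $\tau\in\{T,\infty\}$ I would apply Lemma \ref{lem:IBcausal} with $(U_1,U_2)$ jointly Gaussian, $\mathrm{Var}(U_i)=2P_i$, correlation $\rho$, and the Shannon strategy $X_i=S_iU_i$; the power constraints hold on average, the $\rho$-dependent contribution to $I(U_1,U_2;Y|S_1,S_2)$ again lives in the $(1,1)$ pattern and is linear in $\rho$, while $I(U_1;U_2)\sim\rho^2/(2\ln 2)$, so the same square-root scaling applies. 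For $\tau=(\infty,s)$ I would apply Lemma \ref{lem:IBnoncausal} with the same correlated Gaussian input distribution chosen independent of $(S_1,S_2)$; then $H(X_i|S_i)-H(X_i|S_1,S_2)=0$ and the joint constraint reduces to $I(X_1;X_2)$, and the argument proceeds as before.

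The main technical obstacle is that Lemma \ref{lem:IBnoncausal} is derived via weakly typical set arguments for discrete channel inputs whereas $(X_1,X_2)$ is Gaussian here; I would handle this either via a quantize-and-pass-to-the-limit argument (applying the discrete bound to progressively finer quantizations of $(X_1,X_2)$ and invoking lower semicontinuity of mutual information), or by redoing the joint-typicality calculation directly with Gaussian codewords in the spirit of \cite{CoverThomas}. A secondary technical task is to identify $C_\tau(\mathbf{C}_\mathrm{in},\mathbf{0})$ correctly for each $\tau$ so that the baseline we subtract is the right one; this should follow from standard Gaussian MAC converse techniques together with Jafar's characterization \cite{Jafar} in the $\tau\in\{T,\infty,(\infty,s)\}$ cases.
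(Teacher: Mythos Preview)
Your proposal is correct and uses the same jointly Gaussian constructions as the paper: correlation-$\rho$ Gaussians $(X_1,X_2)$ with $\mathrm{Var}(X_i)=P_i$ for $\tau=0$, and correlated Gaussians $(U_1,U_2)$ with $\mathrm{Var}(U_i)=2P_i$ together with the Shannon strategy $X_i=S_iU_i$ for $\tau=T$, yielding a gain linear in $\rho$ against a cost quadratic in $\rho$.

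The paper's organization is slightly cleaner in one respect that dissolves the technical obstacle you flag. Rather than handling all five values of $\tau$ separately, the paper first observes that the no-cooperation baselines collapse,
\[
  C_{T-1}(\cdot,\mathbf{0})=C_0(\cdot,\mathbf{0}),\qquad
  C_{(\infty,s)}(\cdot,\mathbf{0})=C_\infty(\cdot,\mathbf{0})=C_T(\cdot,\mathbf{0}),
\]
and since $C_\tau(\cdot,h\mathbf{v})$ is nondecreasing in $\tau$ (more state information cannot reduce capacity), it suffices to establish the infinite slope only for $\tau=0$ and $\tau=T$. The cases $\tau\in\{T-1,\infty,(\infty,s)\}$ then follow for free. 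In particular, Lemma~\ref{lem:IBnoncausal} is never invoked for the Gaussian example, so the discrete-alphabet issue you raise does not arise; you can simply drop that part of your plan.
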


The proof appears in Subsection \ref{subsec:gaussianFading}.

\section{Proofs} \label{sec:MACwStateProofs}

\subsection{Proof of Lemma \ref{lem:CinMuCinStar}}
\label{subsec:CinMuCinStar}

Since $\mathbf{C}_\mathrm{in}\in\mathbb{R}^2_{>0}$,
there exists $\mu\in (0,1)$ such that for $i\in\{1,2\}$,
\begin{equation*}
  \mathbf{C}_\mathrm{in}^i
  \geq \mu\mathbf{C}_\mathrm{in}^{*i}.
\end{equation*}
Then for each $\tau\in\{0,T-1,T,\infty\}$, 
a time-sharing argument shows that 
\begin{align*}
  \mathscr{C}_\tau(\mathbf{C}_\mathrm{in},
  \mathbf{C}_\mathrm{out})
  &\supseteq \mu \mathscr{C}_\tau(\mathbf{C}_\mathrm{in}/\mu,
  \mathbf{C}_\mathrm{out})
  +(1-\mu) \mathscr{C}_\tau(\mathbf{0},\mathbf{C}_\mathrm{out})\\
  &\supseteq \mu \mathscr{C}_\tau(\mathbf{C}_\mathrm{in}^*,
  \mathbf{C}_\mathrm{out})
  +(1-\mu) \mathscr{C}_\tau(\mathbf{0},\mathbf{C}_\mathrm{out}).
\end{align*}
Thus 
\begin{equation*}
  C_\tau(\mathbf{C}_\mathrm{in},
  \mathbf{C}_\mathrm{out})\geq
  \mu C_\tau(\mathbf{C}_\mathrm{in}^*,
  \mathbf{C}_\mathrm{out})
  +(1-\mu) C_\tau(\mathbf{0},\mathbf{C}_\mathrm{out}),
\end{equation*}
which implies
\begin{equation*}
  C_\tau(\mathbf{C}_\mathrm{in},
  \mathbf{C}_\mathrm{out})
  -C_\tau(\mathbf{C}_\mathrm{in},\mathbf{0})\geq
  \mu\Big(C_\tau(\mathbf{C}_\mathrm{in}^*,
  \mathbf{C}_\mathrm{out})
  -C_\tau(\mathbf{C}_\mathrm{in}^*,\mathbf{0})\Big)
\end{equation*}
since
\begin{equation*}
   C_\tau(\mathbf{0},\mathbf{C}_\mathrm{out})
   =C_\tau(\mathbf{0},\mathbf{0})
   =C_\tau(\mathbf{C}_\mathrm{in},\mathbf{0})
   =C_\tau(\mathbf{C}_\mathrm{in}^*,\mathbf{0}).
\end{equation*}

\subsection{Proof of Lemma \ref{lem:IBcausal}} 
\label{subsec:IBcausal}

Fix alphabets $\mathcal{U}_1$ and $\mathcal{U}_2$, and mappings 
\begin{equation*}
  f_i:\mathcal{U}_i\times\mathcal{S}_i
  \rightarrow\mathcal{X}_i\text{ for }
  i\in\{1,2\}.
\end{equation*}
Applying Lemma \ref{lem:IBnoState}, where state information 
is only available at the decoder, to the channel
\begin{equation} \label{eq:channelTransform}
  p(y|s,u_1,u_2)=\sum_{x_1,x_2}
  p(y|s,x_1,x_2)\mathbf{1}
  \big\{x_1=f_1(u_1,s_1)\big\}
  \mathbf{1}\big\{x_2=f_2(u_2,s_2)\big\}
\end{equation}
shows that the set of all rate pairs satisfying
\begin{align*}
  R_1 &\leq I(U_1;Y|S,U_2)\\
  R_2 &\leq I(U_2;Y|S,U_1)\\
  R_1+R_2 &\leq I(U_1,U_2;Y|S)
\end{align*}
for some distribution $p(u_1,u_2)$ with 
\begin{equation*}
  I(U_1;U_2)\leq C_\mathrm{out}^1
  +C_\mathrm{out}^2,
\end{equation*}
is achievable for the channel $p(y|s,u_1,u_2)$
when no state information is available at
the encoders. Note that every code for this 
channel can be transformed into a code for the channel
$p(y|s,x_1,x_2)$ with causal state information
available at the encoders; for all times $t\in [n]$
and $i\in\{1,2\}$,
simply apply the mapping $f_i$ to the pair
$(U_{it},S_{it})$, where $U_{it}$ is
the output symbol of encoder $i$
and $S_{it}$ is component $i$ of the state
at time $t$.
Note that the new code has the same rate 
and by (\ref{eq:channelTransform}),
the same average error probability as 
the original code.
Thus $\mathscr{C}_T(\mathbf{C}_\mathrm{in}^*,
\mathbf{C}_\mathrm{out})$ contains the set
of all rate pairs $(R_1,R_2)$ satisfying
\begin{align*}
  R_1 &\leq I(U_1;Y|S,U_2)\\
  R_2 &\leq I(U_2;Y|S,U_1)\\
  R_1+R_2 &\leq I(U_1,U_2;Y|S)
\end{align*}
for some distribution $p(u_1,u_2)$ with 
\begin{equation*}
  I(U_1;U_2)\leq C_\mathrm{out}^1
  +C_\mathrm{out}^2
\end{equation*}
and mappings 
\begin{equation*}
f_i:\mathcal{U}_i\times\mathcal{S}_i
\rightarrow\mathcal{X}_i\quad
\text{for }i\in\{1,2\}.
\end{equation*}

To complete the proof, we show that
for every $\delta\geq 0$ and every distribution
\begin{equation*}
  p(u_1,u_2)p(s_1,s_2)
  p(x_1|u_1,s_1)p(x_2|u_2,s_2)
\end{equation*}
satisfying $I(U_1;U_2)\leq \delta$,
there exist alphabets $\mathcal{U}_1'$
and $\mathcal{U}_2'$, mappings
\begin{equation*}
  f_i:\mathcal{U}_i'\times\mathcal{S}_i
  \rightarrow\mathcal{X}_i\quad
  \text{for }i\in\{1,2\},
\end{equation*}
and distribution $p(u_1',u_2')$ such that
$I(U_1';U_2')=I(U_1;U_2)$, and the rate region
calculated with respect to 
\begin{equation*}
  p(u_1',u_2')
  \mathbf{1}\big\{x_1=f_1(u_1',s_1)\big\}
  \mathbf{1}\big\{x_2=f_2(u_2',s_2)\big\},
\end{equation*}
contains the region calculated with respect 
to $p(u_1,u_2)p(x_1|u_1,s_1)p(x_2|u_2,s_2)$.

To this end, applying Lemma \ref{lem:detFunction}, 
which appears at the end of this subsection, to 
$p(x_i|u_i,s_i)$ demonstrates the existence of a
random variable $V_i$ that is independent
of $(U_i,S_i)$ and a mapping 
\begin{equation*}
  f_i:\mathcal{V}_i\times
  \mathcal{U}_i\times\mathcal{S}_i
  \rightarrow\mathcal{X}_i
\end{equation*}
that satisfies
\begin{equation*}
  p(x_i|u_i,s_i)=
  \sum_{v_i}p(v_i)\mathbf{1}\big\{
  x_i=f_i(v_i,u_i,s_i)\big\}.
\end{equation*}
Furthermore, without loss of generality,
we may assume $V_1$ and $V_2$ are independent,
and $(V_1,V_2)$ is independent of 
$(U_1,U_2,S_1,S_2)$.

Let $U'_i\coloneqq (U_i,V_i)$ for $i\in\{1,2\}$.
Then
\begin{align*}
  I(U'_1;U'_2) &= 
  I(U_1,V_1;U_2,V_2)\\
  &= H(U_1,V_1)+H(U_2,V_2)-H(U_1,U_2,V_1,V_2)\\
  &= I(U_1;U_2)+I(V_1;V_2)=I(U_1;U_2).
\end{align*}
We now show that the rate region calculated with 
respect to the distribution 
\begin{equation*}
  p(s_1,s_2)p(u'_1,u'_2)\mathbf{1}\{x_1=f_1(u'_1,s_1)\}
  \mathbf{1}\{x_2=f_2(u'_2,s_2)\}
\end{equation*}
contains the rate region with respect to 
\begin{equation*}
  p(s_1,s_2)p(u_1,u_2)p(x_1|u_1,s_1)p(x_2|u_2,s_2).
\end{equation*}
Recall that $S=(S_1,S_2)$. We have
\begin{align*}
  I(U'_1;Y|S,U'_2)
  &= I(U'_1;Y,U'_2|S)-I(U'_1;U'_2|S)\\
  &= I(U'_1;Y,U'_2|S)-I(U_1;U_2|S)\\
  &= I(U_1,V_1;Y,U_2,V_2|S)-I(U_1;U_2|S)\\
  &\geq I(U_1;Y,U_2|S)-I(U_1;U_2|S)\\
  &= I(U_1;Y|S,U_2).
\end{align*}
Similarly, we show 
\begin{equation*}
  I(U'_2;Y|S,U'_1)\geq 
  I(U_2;Y|S,U_1).
\end{equation*}
Finally, we have
\begin{align*}
  I(U'_1,U'_2;Y|S)
  &= I(U_1,V_1,U_2,V_2;Y|S)\\
  &\geq I(U_1,U_2;Y|S).
\end{align*}
This completes the proof. We next state and prove
Lemma \ref{lem:detFunction}, which we applied earlier
in the proof. 
In Lemma \ref{lem:detFunction}, the scenario where
$\mathcal{X}$ and $\mathcal{S}$ are finite is a
special case of the functional representation
lemma \cite[p. 626]{ElGamalKim}. 

\begin{lem} \label{lem:detFunction}
Let $\{F(\cdot|s)\}_{s\in\mathcal{S}}$
be a collection of cumulative distribution
functions (CDFs) on alphabet 
$\mathcal{X}\subseteq\mathbb{R}$ 
and let $S$ be a random variable with alphabet 
$\mathcal{S}$. 
Then there exists a random variable $U$ independent 
of $S$ and a mapping 
\begin{equation*}
  g:\mathcal{S}\times\mathcal{U}\rightarrow
  \mathcal{X}
\end{equation*}
such that the conditional CDF of $g(S,U)$ given
$S=s$ equals $F(\cdot|s)$. In the case where
$\mathcal{X}$ and $\mathcal{S}$ are finite, we 
can choose $\mathcal{U}$ such that
\begin{equation} \label{eq:Ucardinality}
  |\mathcal{U}|\leq |\mathcal{S}|
  \big(|\mathcal{X}|-1\big)+1. 
\end{equation}
\end{lem}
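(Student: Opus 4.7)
The plan is to invoke the probability integral transform (inverse CDF method). I would first take $U'$ uniform on $[0,1]$ and independent of $S$, and define
\begin{equation*}
  g(s,u) \coloneqq \inf\{x \in \mathbb{R} : F(x|s) \geq u\}.
\end{equation*}
The standard equivalence $U' \leq F(x|s) \Leftrightarrow g(s,U') \leq x$ then yields
\begin{equation*}
  \pr\{g(s,U') \leq x\} = \pr\{U' \leq F(x|s)\} = F(x|s),
\end{equation*}
so conditioning on $S=s$ gives the desired CDF. Independence of $U'$ and $S$ is immediate from the construction, settling the general claim.

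For the cardinality bound (\ref{eq:Ucardinality}), the plan is to coarsen $U'$ to a discrete alphabet. I would enumerate $\mathcal{X} = \{x_1 < \cdots < x_k\}$ with $k = |\mathcal{X}|$. For each $s \in \mathcal{S}$, the inverse CDF $u \mapsto g(s,u)$ is a step function whose breakpoints lie in $\{F(x_1|s), \ldots, F(x_{k-1}|s)\}$, since $F(x_k|s) = 1$ identically. Consequently, the set
\begin{equation*}
  \mathcal{T} \coloneqq \bigcup_{s \in \mathcal{S}} \{F(x_1|s), \ldots, F(x_{k-1}|s)\} \cap (0,1)
\end{equation*}
has at most $|\mathcal{S}|(|\mathcal{X}| - 1)$ elements, and together with the endpoints partitions $[0,1]$ into at most $|\mathcal{S}|(|\mathcal{X}|-1) + 1$ subintervals $I_1, \ldots, I_m$.

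The crucial observation will be that on each $I_j$, the value $g(s,u)$ is constant in $u$ for every $s$, because no breakpoint of any $F(\cdot|s)$ lies in the interior of $I_j$. I would then let $J(U') \in \{1,\ldots,m\}$ denote the index of the subinterval containing $U'$, set $U \coloneqq J(U')$, and define $\widetilde{g}(s,j) \coloneqq g(s,u)$ for any $u \in I_j$. The resulting $U$ is independent of $S$ (since $U'$ is and $J$ is deterministic), takes at most $|\mathcal{S}|(|\mathcal{X}|-1) + 1$ values, and $\widetilde{g}(S,U)$ retains the conditional CDF $F(\cdot|s)$ given $S=s$.

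The argument is largely mechanical; the main thing to get right is the breakpoint accounting, i.e.\ confirming that coincident values of $F(x_j|s)$ across different states can only merge subintervals rather than create new ones, so the bound $|\mathcal{S}|(|\mathcal{X}|-1)+1$ is never exceeded and is attained only when the CDFs are in ``generic position.''
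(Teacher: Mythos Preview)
Your argument for the general case is exactly the paper's: take $U$ uniform on $[0,1]$, apply the quantile transform, and invoke the standard equivalence $g(s,U)\leq x \Leftrightarrow U\leq F(x|s)$. For the cardinality bound~(\ref{eq:Ucardinality}), the paper does not give a proof at all; it simply remarks that the finite-alphabet case is an instance of the functional representation lemma in \cite[p.~626]{ElGamalKim}. Your explicit coarsening---collecting the at most $|\mathcal{S}|(|\mathcal{X}|-1)$ breakpoints of the step functions $g(s,\cdot)$ and replacing $U'$ by the index of the containing subinterval---is correct and is essentially the standard proof of that lemma, so in this respect your write-up is more self-contained than the paper's.
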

\begin{proof} We prove the result for general
alphabets $\mathcal{X}\subseteq\mathbb{R}$.  
Let $\mathcal{U}\coloneqq [0,1]$.
Define the mapping 
$g:\mathcal{S}\times\mathcal{U}\rightarrow
\mathcal{X}$ as
\begin{equation*}
  g(s,u)=\inf\Big\{x\in\mathcal{X}\Big|
  F(x|s)\geq u\Big\}.
\end{equation*}
Let $U$ be independent of $S$ and uniformly distributed
on $\mathcal{U}=[0,1]$. From the quantile function theorem
\cite[Theorem 2]{Angus}, it follows 
that for all $s\in\mathcal{S}$, $g(s,U)$ has CDF
$F(\cdot|s)$. Set $X=g(S,U)$. Then 
\begin{align*}
  F_{X|S}(x|s) &= \pr\big\{X\leq x\big|S=s\big\}\\
  &= \pr\big\{g(S,U)\leq x\big|S=s\big\}\\
  &= \pr\big\{g(s,U)\leq x\big\}=F(x|s).\\
\end{align*}
\end{proof}

\subsection{Proof of Theorem \ref{thm:mainMACwState}} 
\label{subsec:mainNoncausal}

From the description of the set 
$\mathcal{C}_\tau(\mathcal{S},\mathcal{X}_1,\mathcal{X}_2,\mathcal{Y})$
in Section \ref{sec:resultMACwState}, we see that it 
suffices to prove Theorem \ref{thm:mainMACwState} only in the 
cases $\tau=0$, $\tau=T$, and $\tau=(\infty,s)$.

\textbf{The case $\tau=0$.} When no state information
is available at the encoders, Theorem \ref{thm:mainMACwState} follows 
by applying \cite[Theorem 3]{kUserMAC} to the MAC
\begin{equation*}
  p(s,y|x_1,x_2)=p(s)p(y|s,x_1,x_2),
\end{equation*}
with input alphabets $\mathcal{X}_1$ and $\mathcal{X}_2$, and 
output alphabet $\mathcal{S}\times\mathcal{Y}$. 

\textbf{The case $\tau=T$.} In this case, it suffices to check
that $p_0(u_1)p_0(u_2)$ satisfies
\begin{equation} \label{eq:p0U1U2Max}
  I_0(U_1,U_2;Y|S)=\max_{p(u_1)p(u_2)}
  I(U_1,U_2;Y|S)
\end{equation}
for the MAC
\begin{equation} \label{eq:u1u2MAC}
  p(y|s,u_1,u_2)=\sum_{x_1,x_2}\mathbf{1}\{x_1=f_1(u_1,s_1)\}
  \mathbf{1}\{x_2=f_2(u_2,s_2)\}p(y|s,x_1,x_2),
\end{equation}
since if (\ref{eq:p0U1U2Max}) holds, then Theorem
\ref{thm:mainMACwState} follows by applying the case
$\tau=0$ to the MAC defined by (\ref{eq:u1u2MAC}).

To prove (\ref{eq:p0U1U2Max}), first note that
for all $p(u_1)p(u_2)$,
\begin{align}
  I(U_1,U_2;Y|S)
  &= H(Y|S)-H(Y|U_1,U_2,S)\notag\\
  &\leq H(Y|S)-H(Y|U_1,U_2,S,X_1,X_2) \label{eq:ineqState1}\\
  &= H(Y|S)-H(Y|S,X_1,X_2) \notag\\
  &= I(X_1,X_2;Y|S) \leq I_0(X_1,X_2;Y|S). \label{eq:ineqState2}
\end{align}
For the distribution $p_0(u_1)p_0(u_2)$, however, the inequalities
in (\ref{eq:ineqState1}) and (\ref{eq:ineqState2}) hold with 
equality due to (\ref{eq:p0XSinTermsU}).

\textbf{The case $\tau=(\infty,s)$.}
In this case, we provide a self-contained proof
as it is not straightforward to derive it from prior
cases. This is due to the fact that in this case,
as described in Lemma \ref{lem:IBnoncausal},
the family of achievable distributions is constrained
by three inequalities rather than one. 

Let $p_0(x_1|s_1)p_0(x_2|s_2)$ be a distribution
that satisfies
\begin{equation*}
  I_0(X_1,X_2;Y|S)
  =\max_{p(x_1|s_1)p(x_2|s_2)}I(X_1,X_2;Y|S).
\end{equation*}
By assumption, there exists a distribution 
$p_1(x_1,x_2|s_1,s_2)$ such that
\begin{equation} \label{eq:assumption1}
  I_1(X_1,X_2;Y|S)+
  \mathbb{E}\Big[D\big(p_1(y|S)\|p_0(y|S)\big)\Big]
  > I_0(X_1,X_2;Y|S),\text{ and}
\end{equation}
\begin{equation} \label{eq:assumption2}
  \forall\: (s_1,s_2)\in\mathcal{S}\colon
  \mathrm{supp}(p_1(x_1,x_2|s_1,s_2))
  \subseteq\mathrm{supp}(p_0(x_1|s_1)p_0(x_2|s_2))
\end{equation}
For every $\lambda\in (0,1)$, define 
\begin{equation*}
  p_\lambda(x_1,x_2|s_1,s_2)
  \coloneqq (1-\lambda)p_0(x_1|s_1)p_0(x_2|s_2)
  +\lambda p_1(x_1,x_2|s_1,s_2).
\end{equation*}
Fix $\epsilon>0$ and $\mathbf{v}\in\mathbb{R}^2_{>0}$. 
Define the mapping $h:[0,1]\rightarrow\mathbb{R}$ as
\begin{equation*} 
  h(\lambda)=\frac{1}{v_1}I_\lambda(X_1;S_2|S_1)
  +\frac{1}{v_2}I_\lambda(X_2;S_1|S_2)
  +\frac{1}{v_1+v_2}
  I_\lambda(X_1;X_2|S_1,S_2)+\epsilon\lambda.
\end{equation*}
A direct calculation, followed by an application of 
(\ref{eq:assumption2}), shows that
\begin{align*}
   \frac{d}{d\lambda}I_\lambda(X_1;S_2|S_1)
   \Big|_{\lambda=0^+}&=0\\
   \frac{d}{d\lambda}I_\lambda(X_2;S_1|S_2)
   \Big|_{\lambda=0^+}&=0\\
   \frac{d}{d\lambda}I_\lambda(X_1;X_2|S_1,S_2)
   \Big|_{\lambda=0^+}&=0.
\end{align*}
Note that $h$ is continuously differentiable and 
\begin{equation*}
  \frac{dh}{d\lambda}\Big|_{\lambda=0^+}
  =\epsilon>0.
\end{equation*}
Therefore, by the inverse function theorem, there exists $h_0>0$
such that $h$ is invertible on $[0,h_0)$; that is, there
exists a mapping $\lambda^*:[0,h_0)\rightarrow [0,1]$ 
that satisfies
\begin{equation} \label{eq:lambdaStar}
  h=\frac{1}{v_1}I_{\lambda^*(h)}(X_1;S_2|S_1)
  +\frac{1}{v_2}I_{\lambda^*(h)}(X_2;S_1|S_2)
  +\frac{1}{v_1+v_2}
  I_{\lambda^*(h)}(X_1;X_2|S_1,S_2)
  +\epsilon\lambda^*(h),
\end{equation}
and
\begin{equation*}
  \frac{d{\lambda^*}}{dh}\Big|_{h=0^+}
  =\frac{1}{\epsilon}.
\end{equation*}
We henceforth write $\lambda^*$ instead of $\lambda^*(h)$
when the value of $h$ is clear from context.
By (\ref{eq:lambdaStar}), it now follows that
for all $h\in [0,h_0)$, 
\begin{align*}
  hv_1 &\geq I_{\lambda^*}(X_1;S_2|S_1)\\
  &=H_{\lambda^*}(X_1|S_1)-H_{\lambda^*}(X_1|S_1,S_2)\\
  hv_2 &\geq I_{\lambda^*}(X_2;S_1|S_2)\\ 
  &=H_{\lambda^*}(X_2|S_2)-H_{\lambda^*}(X_2|S_1,S_2)\\
  h(v_1+v_2)
  &\geq I_{\lambda^*}(X_1;S_2|S_1)
  +I_{\lambda^*}(X_2;S_1|S_2)
  +I_{\lambda^*}(X_1;X_2|S_1,S_2)\\
  &=H_{\lambda^*}(X_1|S_1)+
  H_{\lambda^*}(X_2|S_2)-H_{\lambda^*}(X_2|S_1,S_2).
\end{align*}
Thus, by Lemma \ref{lem:IBnoncausal}, 
\begin{equation} \label{eq:CinftySlower}
  C_{(\infty,s)}(\mathbf{\bar C}_\mathrm{in},h\mathbf{v})
  \geq I_{\lambda^*}(X_1,X_2;Y|S)-
  I_{\lambda^*}(X_1;X_2|S).
\end{equation}
Since equality holds in (\ref{eq:CinftySlower})
at $h=0$, we have
\begin{align}
  \MoveEqLeft
  \liminf_{h\rightarrow 0^+}
  \frac{C_{(\infty,s)}(\mathbf{\bar C}_\mathrm{in},h\mathbf{v})
  -C_{(\infty,s)}(\mathbf{\bar C}_\mathrm{in},\mathbf{0})}{h}\\
  &\geq \frac{1}{\epsilon}\frac{d}{d\lambda^*}
  \Big(I_{\lambda^*}(X_1,X_2;Y|S)-
  I_{\lambda^*}(X_1;X_2|S)\Big)\Big|_{\lambda^*=0^+}
  \notag\\
  &= \frac{1}{\epsilon}\frac{d}{d\lambda^*}
  I_{\lambda^*}(X_1,X_2;Y|S)\Big|_{\lambda^*=0^+}
  \notag\\
  &\geq \frac{1}{\epsilon}\Big(
  I_1(X_1,X_2;Y|S)+
  \mathbb{E}\Big[D\big(p_1(y|S)\|p_0(y|S)\big)\Big]
  -I_0(X_1,X_2;Y|S)\Big). \label{eq:dMIio}
\end{align}
The proof of (\ref{eq:dMIio}) is analogous to
\cite[Lemma 14 (ii)]{kUserMAC} and is omitted. 
Since (\ref{eq:dMIio}) holds for all $\epsilon>0$, 
from (\ref{eq:assumption1}) it follows that 
\begin{equation*}
  \lim_{h\rightarrow 0^+}
  \frac{C_{(\infty,s)}(\mathbf{\bar C}_\mathrm{in},h\mathbf{v})
  -C_{(\infty,s)}(\mathbf{\bar C}_\mathrm{in},\mathbf{0})}{h}
  =\infty.
\end{equation*}

\subsection{Proof of Proposition \ref{prop:gaussianFading}}
\label{subsec:gaussianFading}

Since 
\begin{equation*}
C_{T-1}(\mathbf{C}_\mathrm{in}^*,\mathbf{0})=
C_0(\mathbf{C}_\mathrm{in}^*,\mathbf{0})=
C_0(\mathbf{0},\mathbf{0})
\end{equation*}
and
\begin{equation*}
C_{(\infty,s)}(\mathbf{C}_\mathrm{in}^*,\mathbf{0})=
C_\infty(\mathbf{C}_\mathrm{in}^*,\mathbf{0})=
C_T(\mathbf{C}_\mathrm{in}^*,\mathbf{0})=
C_T(\mathbf{0},\mathbf{0}),
\end{equation*}
it suffices to prove the result only when
$\tau=0$ or $\tau=T$.

When $\tau=0$, from Lemma \ref{lem:IBnoState}, it follows
that for any distribution $p(x_1)p(x_2)$ satisfying 
$\mathbb{E}[X_i^2]\leq P_i$ for $i\in\{1,2\}$
and 
\begin{equation*}
  I(X_1;X_2)\leq C_\mathrm{out}^1+C_\mathrm{out}^2,
\end{equation*}
we have
\begin{align} 
  C_0(\mathbf{C}_\mathrm{in}^*,\mathbf{C}_\mathrm{out})
  &\geq I(X_1,X_2;Y|S)-I(X_1;X_2)\notag\\
  &= H(Y|S)-H(Z)\notag\\
  &= \frac{1}{4}\Big(H(X_1+Z)
  +H(X_2+Z)+H(X_1+X_2+Z)-3H(Z)\Big)
  \label{eq:noStateSumCapacity}
\end{align}

Fix $h>0$. Let $(X_1,X_2)$ be jointly Gaussian 
with mean zero and covariance matrix
\begin{equation*}
  \Sigma \coloneqq 
  \begin{pmatrix}
  \sqrt{P_1} & \rho\sqrt{P_1P_2} \\
  \rho\sqrt{P_1P_2} & \sqrt{P_2}
  \end{pmatrix},
\end{equation*}
where $\rho\in [0,1]$ is chosen such that
\begin{equation*}
  I(X_1;X_2)=\frac{1}{2}\log\frac{1}{1-\rho^2}
  \coloneqq h(v_1+v_2).
\end{equation*}
Then 
\begin{equation*}
  \frac{d\rho}{dh}\Big|_{h=0^+}
  =\infty.
\end{equation*}
Using (\ref{eq:noStateSumCapacity}), it follows
that
\begin{equation*}
  C_0(\mathbf{C}_\mathrm{in}^*,h\mathbf{v})
  -C_0(\mathbf{C}_\mathrm{in}^*,\mathbf{0})
  \geq \frac{1}{8}\log\Big(1+
  \frac{2\rho\sqrt{P_1P_2}}{P_1+P_2+N}\Big)
  -h(v_1+v_2),
\end{equation*}
which implies the desired result. 

A similar proof follows when $\tau=T$. In this
case, for fixed $h>0$, let $(U_1,U_2)$
be jointly Gaussian 
with mean zero and covariance matrix
\begin{equation*}
  \Sigma \coloneqq 
  \begin{pmatrix}
  \sqrt{2P_1} & 2\rho\sqrt{P_1P_2} \\
  2\rho\sqrt{P_1P_2} & \sqrt{2P_2}
  \end{pmatrix},
\end{equation*}
where $\rho\in [0,1]$ satisfies
\begin{equation*}
  I(U_1;U_2)=\frac{1}{2}\log\frac{1}{1-\rho^2}
  \coloneqq h(v_1+v_2).
\end{equation*}
For $i\in\{1,2\}$, set $X_i\coloneqq S_i U_i$.
From Lemma \ref{lem:IBcausal}, it follows that
\begin{align} 
  C_0(\mathbf{C}_\mathrm{in}^*,\mathbf{C}_\mathrm{out})
  &\geq I(U_1,U_2;Y|S)-I(U_1;U_2)\notag\\
  &= H(Y|S)-H(Y|S,U_1,U_2)-I(U_1;U_2)\notag\\
  &= H(Y|S)-H(Y|S,U_1,U_2,X_1,X_2)-I(U_1;U_2)
  \label{eq:XdetU}\\
  &= H(Y|S)-H(Z)-I(U_1;U_2)\notag\\
  &= \frac{1}{4}\Big(H(X_1+Z|S_1=1)
  +H(X_2+Z|S_2=1)\\
  &\phantom{= \frac{1}{4}\Big(}
  +H(X_1+X_2+Z|S_1=1,S_2=1)-3H(Z)\Big)-I(U_1;U_2),
  \label{eq:causalStateSC}
\end{align}
where (\ref{eq:XdetU}) follows from the fact that
$(X_1,X_2)$ is a deterministic function of $(S,U_1,U_2)$.
Simplifying (\ref{eq:causalStateSC}) results in
\begin{equation*}
  C_T(\mathbf{C}_\mathrm{in}^*,h\mathbf{v})
  -C_T(\mathbf{C}_\mathrm{in}^*,\mathbf{0})
  \geq \frac{1}{8}\log\Big(1+
  \frac{4\rho\sqrt{P_1P_2}}{2P_1+2P_2+N}\Big)
  -h(v_1+v_2).
\end{equation*}

\subsection{Outer Bounds in the Absence of Cooperation}
\label{subsec:outerBoundsState}

We next prove outer bounds for 
$\mathscr{C}_{T-1}(\mathbf{0},\mathbf{0})$
and $\mathscr{C}_{\infty}(\mathbf{0},\mathbf{0})$.
Together with our inner bounds in Section \ref{sec:codeState},
these outer bounds determine the capacity region
$\mathscr{C}_\tau(\mathbf{0},\mathbf{0})$ for 
all $\tau$, and show
\begin{equation*}
  \mathscr{C}_0(\mathbf{0},\mathbf{0})
  =\mathscr{C}_{T-1}(\mathbf{0},\mathbf{0})
  \text{ and }
  \mathscr{C}_T(\mathbf{0},\mathbf{0})
  =\mathscr{C}_\infty(\mathbf{0},\mathbf{0})
  =\mathscr{C}_{(\infty,s)}(\mathbf{0},\mathbf{0}).
\end{equation*}
The bounds presented here are well known \cite[p. 175]{ElGamalKim}
and are included for completeness.

For convergent sequences $(a_n)_{n=1}^\infty$
and $(b_n)_{n=1}^\infty$, define notation 
``$\simeq$'' and ``$\lesssim$'' as 
\begin{align*}
  a_n\simeq b_n &:\iff
  \lim_{n\rightarrow\infty}
  \frac{1}{n}(a_n-b_n)=0\\
  a_n\lesssim b_n &:\iff
  \lim_{n\rightarrow\infty}
  \frac{1}{n}(a_n-b_n)\leq 0.
\end{align*}

Consider a sequence of $(2^{nR_1},2^{nR_2},n)$ 
codes with $P_e^{(n)}\rightarrow 0$ as $n\rightarrow\infty$
for the MAC with full state information at the 
decoder. Initially, we do not make any assumptions regarding
the presence of state information at the encoders. 
 
We begin with the bound on $R_1$. We have
\begin{align}
  nR_1 &= H(W_1)\notag\\
  &= H(W_1|S^n,W_2)\notag\\
  &\simeq I(W_1;Y^n|S^n, W_2)\notag\\
  &= H(Y^n|S^n,W_2,X_2^n)-H(Y^n|S^n,W_1,W_2,X_1^n,X_2^n)\notag\\
  &= H(Y^n|S^n,X_2^n)-H(Y^n|S^n,X_1^n,X_2^n)\notag\\
  &=\sum_{t=1}^n \Big(H(Y_t|Y^{t-1},S^n,X_2^n)
  -H(Y_t|Y^{t-1},S^n,X_1^n,X_2^n)\Big).\label{eq:obR1}
\end{align}
Similarly,
\begin{equation*}
  nR_2 \simeq
  \sum_{t=1}^n \Big(H(Y_t|Y^{t-1},S^n,X_1^n)
  -H(Y_t|Y^{t-1},S^n,X_1^n,X_2^n)\Big).
\end{equation*}
Next we bound $R_1+R_2$. We have
\begin{align*}
  n(R_1+R_2) &= H(W_1,W_2)\\
  &= H(W_1,W_2|S^n)\\
  &\simeq I(W_1,W_2;Y^n|S^n)\\
  &= H(Y^n|S^n)-H(Y^n|S^n,W_1,W_2,X_1^n,X_2^n)\\
  &= H(Y^n|S^n)-H(Y^n|S^n,X_1^n,X_2^n)\\
  &=\sum_{t=1}^n \Big(H(Y_t|Y^{t-1},S^n)
  -H(Y_t|Y^{t-1},S^n,X_1^n,X_2^n)\Big).
\end{align*}
To proceed further, we need to apply the 
causality constraints of the state information
at the encoders. 

\textbf{The case $\tau=T-1$.} In this case, strictly
causal state information is available at the encoders; that
is, for $i\in\{1,2\}$ and $t\in [n]$, $X_{it}$ is a deterministic
function of $(W_i,S_i^{t-1})$. Continuing from (\ref{eq:obR1}),
we get
\begin{align*}
  nR_1 &\lesssim 
  \sum_{t=1}^n \Big(H(Y_t|S^t,X_{2t})
  -H(Y_t|S^t,X_{1t},X_{2t})\Big)\\
  &= \sum_{t=1}^n I(X_{1t};Y_t|S^t,X_{2t}).
\end{align*}
Similarly, we get
\begin{align*}
  nR_2 &\lesssim
  \sum_{t=1}^n I(X_{2t};Y_t|S^t,X_{1t})\\
  n(R_1+R_2) &\lesssim
  \sum_{t=1}^n I(X_{1t},X_{2t};Y_t|S^t).
\end{align*}
Setting $Q_t\coloneqq S^{t-1}$ for $t\in [n]$ gives
\begin{align*}
  nR_1 &\lesssim \sum_{t=1}^n I(X_{1t};Y_t|Q_t,S_t,X_{2t})\\
  nR_2 &\lesssim \sum_{t=1}^n I(X_{2t};Y_t|Q_t,S_t,X_{1t})\\
  n(R_1+R_2) &\lesssim 
  \sum_{t=1}^n I(X_{1t},X_{2t};Y_t|Q_t,S_t).
\end{align*}
Thus $\mathscr{C}_{T-1}(\mathbf{0},\mathbf{0})$
is contained in the closure of the set of all rate pairs 
satisfying
\begin{align*}
  R_1 &\leq I(X_1;Y|Q,S,X_2)\\
  R_2 &\leq I(X_2;Y|Q,S,X_1)\\
  R_1+R_2 &\leq I(X_1,X_2;Y|Q,S)
\end{align*}
for some distribution $p(q)p(x_1|q)p(x_2|q)$.

\textbf{The case $\tau=\infty$.}
In this case, noncausal state information is available at
the encoders, meaning that for $i\in\{1,2\}$ and $t\in [n]$,
$X_{it}$ is a deterministic function of $(W_i,S_i^n)$. 
From (\ref{eq:obR1}), we have
\begin{align*}
  nR_1 &\lesssim 
  \sum_{t=1}^n \Big(H(Y_t|S_1^t,S_2^{t:n},X_{2t})
  -H(Y_t|S_1^t,S_2^{t:n},X_{1t},X_{2t})\Big)\\
  &= \sum_{t=1}^n I(X_{1t};Y_t|S_1^t,S_2^{t:n},X_{2t}),
\end{align*}
where for $t\in [n]$,
\begin{equation*}
  S_2^{t:n}=\big(S_{2t},S_{2(t+1)},
  \dots,S_{2n}\big).
\end{equation*}
Similarly, we have
\begin{align*}
  nR_2 &\lesssim
  \sum_{t=1}^n I(X_{2t};Y_t|S_1^t,S_2^{t:n},X_{1t})\\
  n(R_1+R_2) &\lesssim
  \sum_{t=1}^n I(X_{1t},X_{2t};Y_t|S_1^t,S_2^{t:n}).
\end{align*}
For $t\in [n]$, following \cite{PermuterEtAl}, 
define
\begin{equation*}
  Q_t\coloneqq (S_1^{t-1},S_2^{t+1:n}).
\end{equation*}
By assumption, 
$(S_1^n,S_2^n)\overset{\text{iid}}{\sim} p(s_1,s_2)$.
Thus
\begin{align*}
  p(s_1^n,s_2^n|s_1^{t-1},s_2^{t+1:n},s_{1t},s_{2t})
  &= p(s_1^{t+1:n},s_2^{t-1}|s_1^t,s_2^{t:n})\\
  &= p(s_1^{t+1:n}|s_2^{t+1:n})p(s_2^{t-1}|s_1^{t-1}),
\end{align*}
which implies that $S_1^n$ and $S_2^n$ are independent given 
$(Q_t,S_{1t},S_{2t})$. Since $(W_1,W_2)$ is independent
of $(S_1^n,S_2^n)$, it follows that for $t\in [n]$,
$X_{1t}(W_1,S_1^n)$ and $X_{2t}(W_2,S_2^n)$ are 
independent given $(Q_t,S_{1t},S_{2t})$. 
Thus $\mathscr{C}_\infty(\mathbf{0},\mathbf{0})$ is 
contained in the closure of the set of all rate pairs 
satisfying
\begin{align*}
  R_1 &\leq I(X_1;Y|Q,S,X_2)\\
  R_2 &\leq I(X_2;Y|Q,S,X_1)\\
  R_1+R_2 &\leq I(X_1,X_2;Y|Q,S)
\end{align*}
for some distribution $p(q)p(x_1|q,s_1)p(x_2|q,s_2)$.

\section{Conclusion}
The presence of distributed state information in 
a network provides an opportunity for cooperation.
In this work, we study encoder cooperation in the MAC
under the CF model. When no state information is available at
either the encoders or the decoder, \cite{kUserMAC}
provides conditions under which the sum-capacity gain of cooperation
has an infinite slope in the limit of small cooperation rate. This
work extends these conditions to scenarios where distributed 
state information is available at the encoders and 
full state information is available at the decoder. 

\bibliographystyle{IEEEtran}
\bibliography{ref}{}

\end{document}